\newtheorem{lemma}{Lemma}
\newtheorem{remark}{Remark}
\newtheorem{corollary}{Corollary}
\newcommand{\cL}{\mathcal{L}}
\newcommand{\cK}{\mathcal{K}}
\newcommand{\cI}{\mathcal{I}}
\newcommand{\cH}{\mathcal{H}}
\newcommand{\coR}{\overline{\mathcal{R}}}
\newcommand{\coK}{\overline{\mathcal{K}}}
\newcommand{\coM}{\overline{\mathcal{M}}}
\newcommand{\orho}{\overline{\rho}}
\newcommand{\otau}{\overline{\tau}}
\newcommand{\opsi}{\overline{\psi}}
\newcommand{\ba}{\boldsymbol{a}}
\newcommand{\bb}{\boldsymbol{b}}
\newcommand{\bA}{\boldsymbol{A}}
\newcommand{\bB}{\boldsymbol{B}}
\newcommand{\bF}{\boldsymbol{F}}
\newcommand{\bG}{\boldsymbol{G}}
\newcommand{\bH}{\boldsymbol{H}}
\newcommand{\bI}{\boldsymbol{I}}
\newcommand{\bL}{\boldsymbol{L}}
\newcommand{\bM}{\boldsymbol{M}}
\newcommand{\bN}{{\boldsymbol{N}}}
\newcommand{\bW}{\boldsymbol{W}}
\newcommand{\bX}{\boldsymbol{X}}
\newcommand{\bY}{\boldsymbol{Y}}
\newcommand{\bSx}{\boldsymbol{\sigma_{\! x}}}
\newcommand{\bSy}{\boldsymbol{\sigma_{\! y}}}
\newcommand{\bSz}{\boldsymbol{\sigma_{\! z}}}
\newcommand{\bSp}{\boldsymbol{\sigma_{\!\text{\bf  +}}}}
\newcommand{\bSm}{\boldsymbol{\sigma_{\!\text{\bf -}}}}
\newcommand{\bSzA}{\boldsymbol{\sigma_{\! z}^{A}}}
\newcommand{\bSpA}{\boldsymbol{\sigma_{\!\text{\bf  +}}^{A}}}
\newcommand{\bSmA}{\boldsymbol{\sigma_{\!\text{\bf -}}^{A}}}
\newcommand{\bSzB}{\boldsymbol{\sigma_{\! z}^{B}}}
\newcommand{\Hint}{\boldsymbol{H}_{\!\rm{int}}}
\newcommand{\Lint}{\mathcal{L}_{\!\rm{int}}}
\newcommand{\cD}{\mathcal{D}}
\newcommand{\oa}{\overline{\ba}}
\newcommand{\ket}[1]{\left|#1\right\rangle}
\newcommand{\bra}[1]{\left\langle #1\right|}
\newcommand{\bket}[1]{\left\langle #1 \right\rangle}
\newcommand{\braket}[2]{\left\langle #1\right|\left.#2 \right\rangle}
\newcommand{\ketbra}[2]{\left|#1 \right\rangle\!\left\langle #2 \right|}
\newcommand{\dotex}{\frac{d}{dt}}
\newcommand{\rs}{\rho_s}
\newcommand{\tr}{{\rm{tr}}}
\newcommand{\ra}[1]{{ #1}}
\begin{document}

\title{Towards generic adiabatic elimination for bipartite open  quantum systems}

\author[*]{R. Azouit}
\author[$\dag$]{F. Chittaro}
\author[$\ddag$]{A. Sarlette}
\author[*]{P. Rouchon}

\affil[*]{\small Centre Automatique et  Syst\`{e}mes, Mines-ParisTech, PSL Research University.
60 Bd Saint-Michel, 75006 Paris, France.}
\affil[$\dag$]{INRIA Paris, 2 rue Simone Iff, 75012 Paris, France; and Ghent University, Belgium.}
\affil[$\ddag$]{Aix Marseille Universit\'{e}, CNRS, ENSAM, LSIS UMR 7296, 13397 Marseille, France, and Universit\'e de Toulon,
CNRS, LSIS UMR 7296, 83957 La Garde, France}
\date{}

\maketitle
\begin{abstract}
We consider a composite open quantum system consisting of a fast subsystem  coupled to a slow one. Using the time-scale separation, we develop an adiabatic elimination technique to derive at any order   the reduced model describing the slow subsystem. The method, based on an asymptotic expansion and geometric singular perturbation theory, ensures the physical interpretation of the reduced  second-order model by giving the reduced dynamics in a Lindblad form and the state reduction in Kraus map form. We give explicit  second-order formulas for Hamiltonian or cascade coupling between the two subsystems. These formulas can be used to engineer, via a careful choice of the fast subsystem, the  Hamiltonian and Lindbald operators  governing the  dissipative dynamics  of the slow subsystem.
\end{abstract}

\section{Introduction}
As any quantum system interacts with an environment, the rigorous way to describe its evolution would be a Schr\"odinger equation including the environment. Studying this whole system is usually not possible due to the complexity of the environment. Therefore, using typical assumptions on the environment one can get rid of it by some Markov approximations and obtain
a Lindblad master equation \cite{BreuerPetruccioneBook} governing only the system of interest. A similar situation arises naturally within a quantum system composed of several interacting subsystems: one may want to get the dynamics of a particular subsystem of interest by using some assumptions to get rid of the other ones. When the  subsystem of interest has much slower dissipation rates (i.e.~time scales) than the other ones and is weakly  coupled to the other ones, such procedure is known as adiabatic elimination.

This kind of setting is related to \ra{reservoir engineering \cite{PoyatCZ1996PRL, CarvalhoPRL01, LeghtasSC15}} where the goal is to design the dynamics of a target subsystem by properly choosing its interaction with the other subsystems. The interaction is engineered in such a way that the dynamics of the  target subsystem is, after adiabatic elimination of the other subsystems, the desired one. Rigorously computing the reduced dynamics associated with the subsystem of interest is not straightforward. We here propose a systematic adiabatic elimination procedure that preserves the quantum structure.  As illustrated by the following example, the resulting reduced dynamics may include some non-intuitive phenomena.

Consider a pair of qubits. The first qubit, labelled by $A$, is subject to strong energy loss (life time of the excited state $1/\gamma$ ) and is driven by a coherent resonant field of amplitude $u \geq 0$. The second qubit, labelled by $B$, is isolated from the environment but is weakly coupled to the first one by a dispersive Hamiltonian of strength $\chi$. This setting is an abstraction of a target qubit $B$ coupled to a spurious ``TLS'', i.e.~a strongly decoherent two-level system with $\gamma \gg |\chi|$, see e.g.~\cite{lisenfeld2015observation}.
The density operator $\rho$ of the joint system is governed by the following master equation:
\begin{align*}
\frac{d \rho}{dt} = u[\bSpA - \bSmA\;,\;\rho] + \gamma (\bSmA \rho \bSpA - \tfrac{\bSpA \bSmA \rho + \rho \bSpA \bSmA}{2})
 - i \chi [\bSzA \otimes \bSzB, \rho] \;
\end{align*}
where $\bSmA$ is the lowering operator for qubit $A$, $\bSpA=(\bSmA)^\dag$, and $\bSzA$, $\bSzB$ are the usual Pauli operators  for qubits $A$ and $B$ respectively. Since $\gamma \gg |\chi|$, the effect of the ``TLS'' qubit $A$ on the target qubit $B$ can be obtained by adiabatically eliminating qubit $A$. A direct application of the formula~\eqref{eq:L1L2dispersive} that we obtain below, shows that up to third  order terms in $\epsilon = |\chi|/\gamma$,  the reduced dynamics of qubit $B$ are:
\begin{align}\label{eq:example1}
\frac{d\rho}{dt} = i\chi \frac{\gamma^2}{\gamma^2+8u^2} [\bSzB,\rho] + 64  \gamma \chi^2  u^2 \, \frac{(\gamma^2 + 2u^2)}{(\gamma^2+8u^2)^3} \;
\big(\bSzB \rho \bSzB - \rho\big) \, .
\end{align}
As one may intuitively expect, qubit $A$ induces on qubit $B$ a deterministic phase shift, corresponding to the mean value of $\bSzA$, and some phase loss related to the uncertainty in $\bSzA$. When $u$ is small, as expected, qubit $B$ suffers little phase loss as qubit $A$ is close to the ground state, which is an eigenstate of $\bSzA$. However, more surprisingly, in the presence of strong drive (i.e large $u$) the phase loss also vanishes. This is due to the fact that the system is then subject to a sort of dynamical decoupling \cite{ViolaPRL99}.

The present paper thus provides a systematic method for deriving the Lindblad form of reduced dynamics like \eqref{eq:example1}, up to third order terms, for general bipartite systems with two times-scales. For a Hamiltonian coupling between the two subsystems, this second order Lindblad form is given by the generic formula~\eqref{eq:L1L2Hamiltonnian}. For a cascade coupling, it is given by~\eqref{eq:L1L2Cascade}.

Throughout the literature, adiabatic elimination formulas for open quantum systems have been obtained for several particular examples separately. A specific atom-optics system is analyzed in \cite{AtkinsPRA03}; for lambda systems see e.g. \cite{BrionJPA07}, \cite{mirrahimi-rouchonieee09}, \cite{ReiteS2012PRA}; systems with Gaussian dynamics are investigated in \cite{CernotikPRA15}. In contrast, generic techniques for adiabatic elimination have attracted much less attention. Treating the Lindblad master as a standard linear system, Kessler has developed in  \cite{Kessl2012PRA} a generalization of the Schrieffer-Wolff formalism. \ra{In \cite{GoughJSP07, BoutenJFA08}, the authors derive a reduced dynamics for systems described by Quantum Stochastic Differential Equations in the limit where the speed of the fast system goes to infinity. Our aim is to go further and precisely characterise the order of the approximation. This is motivated by the fact that in some recent quantum experiments aiming via reservoir engineering at strong indirect stabilization of quantum systems \cite{LeghtasSC15}, a better knowledge of the order of validity of such approximations is becoming necessary : the increase in the accuracy of experiments implies the need to compute higher order model in order to properly describe the slow dynamics.} Our aim with the present paper is to provide formulas with improved direct applicability, control on the order of approximation, and explicit expression of how the reduced system is embedded in the full Hilbert space.

We focus on bipartite open quantum systems. Using a geometric approach, we treat the slow dynamics as a perturbation of the fast one. This leads to an asymptotic expansion in powers of the small parameter describing the time-scales separation and readily allows one to select the order of the approximation. Moreover, the specificity of our approach compared to standard abstract systems theory, is to preserve the structural properties that ensure the physical interpretation of the reduced model: first, the reduced density operator is governed by a Lindblad master equation; and second, the mapping from the reduced density operator to the density operator of the entire system is a Kraus map \cite{ChoiLAA1975} i.e.~completely positive trace preserving. These two features ensure the preservation of a density operator interpretation. The computation of the Kraus map allows to estimate for instance the residual entanglement, inside the protected subspace, between the two subsystems of a reservoir engineering setup, which may have important practical consequences. To our knowledge, such care about the mapping between the complete and reduced systems is new in adiabatic elimination techniques.

Our approach builds on the framework of center manifold theory \cite{carr-book} and geometric singular perturbations theory \cite{Fenichel79} to obtain recurrence relations between the approximations at different orders. At the first order, we readily retrieve the so-called Zeno Hamiltonian for any type of coupling between the fast and slow subsystems. Then, focusing on two standard forms of coupling namely Hamiltonian interaction and cascade interaction, we derive formulas for the second order approximations while ensuring a Lindblad form for the reduced dynamics and a mapping in Kraus map form. The formulas show that the Hamiltonian interaction with a decohering subsystem leads to decoherence at the second order. \ra{Before performing any detailed calculations, we obtain} the general structural result that the number of decoherence channels is equal to the minimal number of tensor-product terms required to express the interaction Hamiltonian.

The paper is organized as follows. In section \ref{sec:BipartSyst} we present the considered system types, our general adiabatic elimination approach, and the results on first order approximation. Section \ref{sec:HamSyst} is devoted to the computation of the second order approximation in a Lindblad form for a Hamiltonian coupling between the fast and slow subsystems. The typical cases of dispersive or resonant interaction are mentioned explicitly (for readers interested mainly in the final formulas: see equations \eqref{eq:L1L2dispersive} and \eqref{eq:L1L2resonnant}). In Section \ref{ssec:Ex2photon}, the use of these formulas is illustrated by obtaining new explicit expressions for the residual imperfections on a reservoir engineering setup with two-photon pumping \cite{theseJoachimCohen,TouzardOngoing}. Section \ref{sec:CascSyst} investigates a different kind of coupling, namely the cascade interaction. We derive the second order approximation in a Lindblad form provided some technical condition holds.

\paragraph{Acknowledgements:} The authors want to thank Zaki Leghtas and Mazyar
Mirrahimi for useful discussions about this subject.

\paragraph{Notations:}
The underlying Hilbert spaces are denoted by $\cH_{\bullet}$ with different subscripts $\bullet$.
Operators on Hilbert spaces are denoted with bold letters such as $\ba$ or $\bX$.
Super-operators, i.e.~operators acting e.g.~on $\bX$, are denoted via calligraphic capital letters such as $\cL$ or $\cK$.

\section{Adiabatic elimination in bipartite systems} \label{sec:BipartSyst}

Consider on the Hilbert space $\cH_A\otimes \cH_B$ a Lindblad master equation  with a two-time scale  structure:
\begin{equation}\label{eq:CompositeDyn}
\dotex \rho = \cL_A(\rho) + \epsilon  \Lint(\rho) + \epsilon \cL_B(\rho)
\end{equation}
where
\begin{itemize}
  \item $\epsilon$ is a small positive parameter;

  \item $\cL_A$ and $\cL_B$ are Lindbladian  super-operators acting only on $\cH_A$ and $\cH_B$ respectively:
\begin{equation} \label{eq:DefLind_A}
  \cL_{\xi}(\rho)= - i [\bH_\xi,\rho] +\sum_\mu \cD_{\bL_{\xi,\mu}}(\rho), \quad \xi=A,B
\end{equation}
  with  $\bH_\xi$ a Hermitian operator on $\cH_\xi$,  $\bL_{\xi,\mu}$ operators on  $\cH_\xi$ (not necessarily Hermitian)  and with    $\cD_{\bX}$ denoting  the dissipation super-operator associated with operator $\bX$,
 $$
 \cD_{\bX}(\rho)= \bX \rho \bX^\dag - \frac{1}{2} \big(\bX^\dag\bX \rho +  \rho \bX^\dag\bX\big);
 $$
 \item  $\Lint$ is an interaction Lindbladian super-operator  acting on both $\cH_A$ and $\cH_B$.
\end{itemize}

We assume that for any initial  density operator $\rho$ on $\cH_A$, the solution  $\dotex \rho = \cL_A(\rho)$ converges exponentially towards a unique steady-state density operator $\orho_A$.

Consider the solution of~\eqref{eq:CompositeDyn} starting from a density operator $\rho$ on $\cH_A\otimes \cH_B$. For $\epsilon=0$,  this  solution converges towards the separable state $\orho_A\otimes \tr_A(\rho)$, with $\tr_A(\rho)$ the partial trace over $\cH_A$ of the initial density operator on $\cH_A\otimes \cH_B$. For $\epsilon >0$ and small, this solution, after the relaxation time associated with the dynamics on $\cH_A$ governed by $\cL_A$,   remains close  to $\orho_A\otimes \rho_s$ where $\rho_s$ is a slowly evolving density operator on $\cH_B$. We explain here how to express quantitatively, to various orders in $\epsilon$, the fact that the solution of~\eqref{eq:CompositeDyn} remains close to $\orho_A\otimes \rho_s$ and how $\rho_s$ evolves.

 More precisely, we express the solution $\rho$ of~\eqref{eq:CompositeDyn} and the time derivative of $\rho_s$ as  linear functions of $\rho_s$,
by means the linear  super-operators  $\cK$  and $\cL_s$; afterwards, we develop these two super-operators in power of $\epsilon$, thus obtaining an asymptotic expansion of
$\rho$ and the time derivative of $\rho_s$:
\begin{eqnarray}
  \rho  &=& \cK(\rho_s) = \cK_0(\rho_s) + \epsilon \cK_1(\rho_s) + \epsilon^2 \cK_2(\rho_s) + \ldots \label{eq:Kexpansion} \\
  \dotex \rho_s  &=&  \cL_s(\rho_s)=  \cL_0(\rho_s) + \epsilon \cL_1(\rho_s) +  \epsilon^2 \cL_2(\rho_s) + \ldots \label{eq:Lexpansion}
\end{eqnarray}
with
$$
\cK_0(\rho_s)= \orho_A\otimes \rho_s \quad\text{ and }\quad \cL_0(\rho_s)=0.
$$
Plugging these series into~\eqref{eq:CompositeDyn}:
$$
\cL_A(\cK(\rho_s)) + \epsilon  \Lint(\cK(\rho_s)) + \epsilon \cL_B(\cK(\rho_s))=\dotex \rho =  \cK( \cL_s(\rho_s))
$$
gives
\begin{multline*}
\cL_A\left( \epsilon \cK_1(\rho_s) + \epsilon^2 \cK_2(\rho_s) + \ldots \right) + \epsilon  \Lint\left(\cK_0(\rho_s) + \epsilon \cK_1(\rho_s) + \epsilon^2 \cK_2(\rho_s) + \ldots \right) \\+ \epsilon \cL_B\left(\cK_0(\rho_s) + \epsilon \cK_1(\rho_s) + \epsilon^2 \cK_2(\rho_s) + \ldots \right)
\;\;  = \;\;
  \cK_0\left(\epsilon \cL_1(\rho_s) +  \epsilon^2 \cL_2(\rho_s) + \ldots\right) \\+ \epsilon \cK_1\left(\epsilon \cL_1(\rho_s) +  \epsilon^2 \cL_2(\rho_s) + \ldots\right)+ \epsilon^2 \cK_2\left(\epsilon \cL_1(\rho_s) +  \epsilon^2 \cL_2(\rho_s) + \ldots\right) + \ldots
\end{multline*}
where the zero order term vanishes since $\cL_A(\cK_0(\rho_s))\equiv0$.
By identifying terms of same order versus $\epsilon$, it is possible  to compute higher order terms from lower order terms as usual:
\begin{eqnarray}
  \cL_A(\cK_1(\rho_s))+ \Lint(\cK_0(\rho_s))+ \cL_B(\cK_0(\rho_s)) &=& \cK_0(\cL_1(\rho_s)) \label{eq:Order1} \; ,\\
  \cL_A(\cK_2(\rho_s)) +\Lint(\cK_1(\rho_s)) + \cL_B(\cK_1(\rho_s)) &=&  \cK_0(\cL_2(\rho_s))+  \cK_1(\cL_1(\rho_s))
  \; , \label{eq:Order2}
  \\
  &\vdots& \nonumber
\end{eqnarray}

Equation~\eqref{eq:Order1} corresponds to first order terms in $\epsilon$ and using the known expression of $\cK_0$ it reads:
$$
\cL_A(\cK_1(\rho_s))+ \Lint(\orho_A\otimes\rho_s)+\orho_A \otimes\cL_B(\rho_s) = \orho_A\otimes\cL_1(\rho_s)
.
$$
Since $\tr_A(\cL_A)=0$,  one  takes  the partial trace over $\cH_A$ of this equation and  gets
\begin{equation}
 \cL_1(\rho_s) = \tr_A\Big( \Lint(\orho_A\otimes\rho_s)\Big) + \cL_B(\rho_s) \label{eq:L1} \; .
\end{equation}
To obtain $\cK_1$, we must invert $\cL_A$. Remark~\ref{rmk:InvLindblad}, in appendix, shows that the general solution $\bX$ of $-\cL_A(\bX)= \bW-\tr(\bW)\orho_A$ can be written as
\begin{equation}\label{eq:InvLA}
\bX= \otau\coK_A(\bW) + \lambda \orho_A \; , \;\; \forall \lambda \in \mathbb{C} \; .
\end{equation}
where $\coK_A$ is a linear trace-preserving and complete positive map on $\cH_A$ and  $\otau$ a positive constant.

When $\ker\orho_A \subset \ker \bW$, i.e. $\bW=\tilde{\bW} \orho_A$ for some operator $\tilde{\bW}$,  lemma~\ref{lem:incluNoyau} shows that $\ker\orho_A \subset \ker \bX$, i.e. that  $\coK_A(\tilde{\bW} \orho_A) = \tilde{\bX} \orho_A$ for some operator $\tilde{\bX}$ (relevant when $\orho_A$ is not of full rank).   We will write as usual $\coK_A$ meaning $\coK_A \otimes \cI$ when applied to operators on $\cH_A \otimes \cH_B$. With this super-operator, using the just computed expression for $\cL_1(\rho_s)$, we readily get
\begin{eqnarray}
  \cK_1(\rho_s)&=& \otau \coK_A \Big(\Lint(\orho_A\otimes\rho_s)\Big) + \orho_A\otimes \bG_{1,B}(\rho_s)
    \label{eq:K1}
\end{eqnarray}
where $\bG_{1,B}$ is any Hermitian operator on $\cH_B$ (gauge degree of freedom).
We will show in the following how the gauge choice enables to select between system propertie.
We will show in the following how to fix the gauge in order to guarantee different properties (e.g. trace preservation, complete positivity, or a simple expression) of the expansion.

Lemma~\ref{lem:TrALint}, in appendix, proves that $\rho_s\mapsto \tr_A(\Lint(\orho_A\otimes \rho_s))$ remains of Lindblad type when $\Lint$ is of Lindblad type. Thus $\cL_1$ in \eqref{eq:L1} is of Lindblad type and $\dotex \rho_s = \epsilon \cL_1(\rho_s)$ can be interpreted in this case as a Zeno dynamics. For $\cK$, with the gauge $\bG_{1,B}=0$ in~\eqref{eq:K1}, we have
\begin{multline*}
 \cK(\rho_s)= \cK_0(\rho_s) + \epsilon \cK_1(\rho_s) + O(\epsilon^2) \\= \coK_A \left( \orho_A\otimes \rho_s + \epsilon\otau \Lint(\orho_A\otimes \rho_s)\right) + O(\epsilon^2)
 \\
  = \coK_A \left( e^{\epsilon\otau \Lint} (\cK_0(\rho_s)) \right) + O(\epsilon^2)
  .
\end{multline*}
Since $\coK_A$, $e^{\epsilon\otau \Lint}$ and $\cK_0$  are linear completely positive trace-preserving (CPTP) maps, the first order approximation $\cK_0+ \epsilon \cK_1$  coincides, up to second order terms, with the linear CPTP map $\coK_A \circ  e^{\epsilon\otau \Lint} \circ \cK_0$.
With the linear gauge $\bG_{1,B}(\rho_s)=-\otau \tr_A \Big(\Lint(\orho_A\otimes\rho_s)\Big)$, we get instead
\begin{equation}\label{eq:GenGaugeIFAC}
 \cK(\rho_s)= \coK_A \circ e^{\epsilon\otau \Lint}  \circ \cK_0 \circ e^{-\epsilon\otau \ra{\tr_A\Big( \Lint(\orho_A\otimes\rho_s)\Big)}}(\rho_s) + O(\epsilon^2).
\end{equation}
This expression is not always completely positive due to the backwards propagation with $\cL_1$, but it is e.g.~when $\cL_1$ is Hamiltonian (thus in particular when $\Lint$ is Hamiltonian).
\vspace{2mm}

The computations of the second order corrections $\cK_2$ and $\cL_2$ can be done via~\eqref{eq:Order2} along the same lines. The obtained expressions will depend on the gauge choice. We conjecture that, at any order $n$ versus $\epsilon$, we can choose  $(\cK_j,\cL_j)_{1\leq j \leq n}$  such that all equations corresponding to orders less that $n$ are satisfied and additionally, such that  $\sum_{j=0}^{n} \epsilon^j \cK_j(\rho_s)$ and  $\sum_{j=1}^{n} \epsilon^j \cL_j(\rho_s)$ coincide, up to $n+1$ order terms, with a trace-preserving completely positive map and with a Lindbladian dynamics, respectively. We have seen on several examples that the related expressions can be quite involved and this is the topic of ongoing research.
For instance, in general, the autonomous dynamics $\cL_B$ undesirably appears in $\cL_2$ if we take the gauge choice $\bG_{1,B}=0$ which always guarantees the map $(\cK_0+\epsilon \cK_1)$ to be CPTP.

\ra{We emphasize here that all gauge choices describe the same physical model i.e. for any gauge choice, the physical evolution of $\rho=\cK(\rs)$ (truncated at any order $\epsilon^k$) gives the same result (up to $\epsilon^{k+1}$ order terms). In this sense, the choice of the gauge degree of freedom has to be made in order to allow a better physical interpretation of $\rs$ and/or ensure physical properties (e.g. trace preservation, complete positivity $\dots$).}

Therefore in this paper, we have carried out further analysis leading to more explicit expressions up to second order, for particular structures that are relevant in typical quantum systems. The associated conclusions, e.g.~expressions proving that we retrieve trace-preserving completely positive maps and Lindblad type dynamics at this higher order as well, are presented in the next sections.


\section{Hamiltonian interaction} \label{sec:HamSyst}

One typical structure is when $\Lint$ reduces to a Hamiltonian interaction:
\begin{equation}
  \label{eq:Hinteraction}
  \dotex \rho = \cL_A(\rho) + \epsilon \Big( - i [ \Hint,\rho] + \cL_B(\rho) \Big)
\end{equation}
where $\Hint= \sum_{k=1}^{m} \bA_k \otimes \bB_k^\dagger$ where $\bA_k$ and $\bB_k$ are (not necessarily Hermitian) operators on $\cH_A$ and $\cH_B$, respectively. Such models are encountered e.g.~in reservoir engineering, where $\cH_A$ is the main dissipator allowing to stabilize the target system $\cH_B$ (see example below). More generally, weak Hamiltonian coupling of a target system $\cH_B$ to some environment is a standard model. The two typical cases leading to $\Hint$ Hermitian are $\bA_k$ and $\bB_k$ Hermitian, as appears in ``dispersive coupling'' ($\bA_k$ and $\bB_k$ are diagonal in the energy bases of the respective subsystems); and $\bA_{k+1} \otimes \bB_{k+1}^\dagger = \bA_k^\dagger \otimes \bB_k$, as appears in ``resonant coupling'' ($\bA_k$ and $\bB_k$ are annihilation operators). We will treat these two cases separately for the second-order approximation.\vspace{2mm}

For the first order approximation, we just plug this particular setting in the general formula and we get
\begin{eqnarray}
 \cL_1(\rho_s) & = &  - i \sum_{k=1}^m \, \left[ \tr(\bA_k \orho_A) \bB_k^\dagger\,,\;\rho_s\right] + \cL_B(\rho_s) \label{eq:L1c} \; .
\end{eqnarray}
In particular, $\cL_1$ is obviously of Lindblad type as the correction due to the coupling amounts to the Zeno Hamiltonian associated with $\Hint$ (i.e.~the Hamiltonian with $\cH_A$ frozen to $\orho_A$ by the fast dynamics).

For the remainder of this section, we take  the gauge $\bG_{1,B}(\rho_s)=-\otau \tr_A \Big(\Lint(\orho_A\otimes\rho_s)\Big)$ that will lead to simpler explicit expressions. First, for $\cK_0 + \epsilon\cK_1$, by directly plugging the Hamiltonian interaction into \eqref{eq:Order1} we obtain:
\begin{multline}\label{eq:Hint:K1}
\cK(\rho_s) = \cK_0(\rho_s) + \epsilon\cK_1(\rho_s) +O(\epsilon^2) =\\ \exp\big(\text{-}i \epsilon \bM \big) \; (\orho_A \otimes \rho_s) \; \exp\big(i \epsilon \bM^\dagger \big) \;\; +O(\epsilon^2) \quad
\text{ with } \bM = \sum_{k=1}^{m} \bF_k \otimes \bB^\dagger_k \; , 
\end{multline}
where $\bF_k \orho_A = \otau \coK_A \left(\, \bA_k \; \orho_A\, \right) - \otau \tr(\bA_k \; \orho_A) \orho_A$
satisfies $\tr(\bF_k \orho_A)
 = 0$, for $k=1,2,...,m$. Thus the abstract inversion formula now has to be applied only on the given interaction operators $\bA_k \orho_A$ ($k=1,2,...,m$), instead of on any possible operator. Note that expression \eqref{eq:Hint:K1} differs from the general one \eqref{eq:GenGaugeIFAC} by second order terms and, while the expression without the $O(\epsilon^2)$ terms is completely positive, $\bM$ is not always Hermitian so corrections of $O(\epsilon^2)$ might be necessary here to exactly preserve the trace.

We now turn to the second order.


\subsection{Second order approximation}

The gauge choice $\bG_{1,B}(\rho_s)=-\otau \tr_A \Big(\Lint(\orho_A\otimes\rho_s)\Big)$ also facilitates the expression of $\cL_2$. Taking the partial trace $\tr_A$ on \eqref{eq:Order2}, note that $\tr_A(\coK_A(\bX)) = \tr_A(\bX)$ for all $\bX$ on $\cH_A \otimes \cH_B$ thanks to trace preservation, and thus $\tr_A(\cK_1(\bX)) = \otau (\cL_1(\bX)-\cL_B(\bX)) + \bG_{1,B}(\bX)$ for all $\bX$ on $\cH_B$. From there we get simply:
\begin{eqnarray}
\cL_2(\rho_s) & = & \tr_A\Big(\Lint\Big(\cK_1(\rho_s \Big) \Big) \label{eq:Hint:L2} \; .
\label{eq:forL2}
\end{eqnarray}

\subsubsection{Dispersive interaction}
Let us first consider the case of a single interaction term, i.e.~where $\Hint = \bA \otimes \bB$ with $\bA,\bB$ both Hermitian. Plugging this expression of $\Lint$ and \eqref{eq:Hint:K1} for $\cK_1$ into \eqref{eq:forL2}, we obtain
$$\cL_2(\rho_s) = \tr(\bF \orho_A \bA + \bA \orho_A \bF^\dagger) \bB \rho_s \bB - \tr(\bA \orho_A \bF^\dagger)\rho_s \bB \bB - \tr(\bF \orho_A \bA) \bB \bB \rho_s \, .$$
By separating real and imaginary coefficients in the last two terms this can be rewritten as
\begin{equation} \label{eq:L2dispersive}
  \cL_2(\rho_s) = -i\left[\, \frac{\tr(\bF \orho_A \bA - \bA \orho_A \bF^\dagger)}{2i} (\bB)^2 ,\,\rho_s \,\right]  \;\; + \; \tr(\bF \orho_A \bA + \bA \orho_A \bF^\dagger) \, \cD_{\bB}(\rho_s) \; .
\end{equation}
The first term contains a Hermitian Hamiltonian, second-order correction to the Zeno Hamiltonian of $\cL_1$. The second term contains a dissipation in $\bB$, at the rate $\tr(\bF \orho_A \bA + \bA \orho_A \bF^\dagger)$. It expresses how the quantum uncertainty among eigenstates of $\bA$ on $\cH_A$, implies uncertainty in the Hamiltonian evolution with $\bB$ on $\cH_B$ when the two systems are dispersively coupled. There only remains to check that we will always get $\tr(\bF \orho_A \bA + \bA \orho_A \bF^\dagger) \geq 0$, which is done by Lemma \ref{Lem:Xipos} in appendix.

Thus, the second order reduced dynamics for the dispersive interaction is given by:
\begin{eqnarray}\label{eq:L1L2dispersive}
  \dotex \rho_s & = & -i \epsilon \left[\, \tr(\ba \orho_A)\bB, \, \rs \right] + \epsilon \cL_B(\rs) \nonumber \\
 & & -i\epsilon^2\left[\, \frac{\tr(\bF \orho_A \bA - \bA \orho_A \bF^\dagger)}{2i} (\bB)^2 ,\,\rho_s \,\right]  \;\;  + \epsilon^2 \tr(\bF \orho_A \bA + \bA \orho_A \bF^\dagger) \, \cD_{\bB}(\rho_s) \; . \;\;
\end{eqnarray}
with $\bF$ given by $\bF \orho_A = - \cL_A^{-1}\big(\bA- \tr(\bA \; \orho_A) \orho_A)\big)= \ra{+}\otau \coK_A \big(\, \bA \; \orho_A-\tr(\bA \; \orho_A) \orho_A \big)$.

\subsubsection{General case and resonant interaction}\label{ssec:GeneralLambda}

For $\Hint$ comprising several terms, we can still put $\cL_2$ in a form with Hamiltonian and dissipation super-operator by following the same procedure. Instead of directly getting a dissipation rate $\tr(\bF \orho_A \bA + \bA \orho_A \bF^\dagger)$ as for the dispersive interaction, we here get a positive semidefinite matrix in $\mathbb{C}^{m \times m}$ which determines dissipation operators consisting of linear combinations of the $\bB_k$. The number of decoherence channels in $\cL_2$ will be equal to the number $\leq m$ of linearly independent terms in $\Hint$. Explicitly, plugging the expression \eqref{eq:Hint:K1} of $\cK_1$ and the Hamiltonian interaction into \eqref{eq:Hint:L2}, a few algebraic manipulations readily yield:
\begin{eqnarray}
\nonumber \cL_2(\rho_s) & = &
- i \left[\; \sum_{k,j}  Y_{k,j}\, \bB_k \bB_j^\dagger \; , \; \rho_s \right] + \sum_{k,j}  X_{k,j} \, \left(\,
\bB_j^\dagger \rho_s \bB_k - \tfrac{1}{2}\left( \bB_k \bB_j^\dagger \rho_s + \rho_s \bB_k \bB_j^\dagger \right)
\,\right) \\
\label{eq:L2bigfirst} {\text{with}} & & X_{k,j}= \tr\left( \bF_j \orho_A \bA_k^\dagger  + \bA_j \orho_A \bF_k^\dagger \right)\\
\nonumber & & Y_{k,j} = \frac{1}{2i}\, \tr\left( \bF_j \orho_A \bA_k^\dagger  - \bA_j \orho_A \bF_k^\dagger \right)  \;\;\; \text{ for } j,k=1,2,...,m \, .
\end{eqnarray}
Both matrices $X$ and $Y$ $\in \mathbb{C}^{m \times m}$ are Hermitian. Thus the first term contains a Hermitian Hamiltonian. Lemma \ref{Lem:Xipos} in appendix proves that $X$ is positive semidefinite. Then equation \eqref{eq:L2bigfirst} is of standard Lindblad form. Indeed, writing $X = \Lambda \Lambda^\dagger$, where $\Lambda$ can be obtained for instance easily by Cholesky factorization, we get
$$\cL_2(\rho_s) = - i \left[\; \sum_{k,j}  Y_{k,j}\, \bB_k \bB_j^\dagger \; , \; \rho_s \right] + \sum_{k=1}^m \cD_{\bL_k}(\rho_s) \;\; \text{ where } \;  \bL_k = \sum_{j=1}^m \ra{\Lambda_{j,k}^* \bB_j^\dag} \;\;, \;\;\;\; X = \Lambda \Lambda^\dagger  \; .
$$
The second order reduced dynamics is then given by:
\begin{equation}\label{eq:L1L2Hamiltonnian}
  \dotex \rho_s = - i \left[\;\epsilon\sum_k \tr(\bA_k \orho_A) \bB_k +  \epsilon^2 \sum_{k,j}  Y_{k,j}\, \bB_k \bB_j^\dagger \; , \; \rho_s \right] + \epsilon \cL_B(\rs) + \epsilon^2 \sum_{k=1}^m \cD_{\bL_k}(\rho_s) \, .
\end{equation}
The choice of $\Lambda$ is not unique, reflecting the non-uniqueness of Lindblad representations with several dissipation channels. Usually and as illustrated in the example of Section \ref{ssec:Ex2photon}, a preferred decomposition, identifying physically meaningful components, can be imposed.
\vspace{2mm}

For a single resonant interaction $\Hint = \bA \otimes \bB^\dagger + \bA^\dagger \otimes \bB$, we thus have $\bF_1$ and $\bF_2$ given essentially by $-\cL_A^{-1}\equiv \otau\coK_A$ acting on $\bA \orho_A- \tr{\bA\orho_A}\orho_A$ and on $\bA^\dagger \orho_A- \tr{\bA^\dag \orho_A}\orho_A$ respectively; $X,Y,\Lambda$ $\in \mathbb{C}^{2 \times 2}$ defined as above with no particular simplifications, unless $\orho_A$ takes a particular form; and finally
\begin{eqnarray}
\tfrac{d}{dt} \rho_s & = & - i \, \epsilon \left[ \tr(\bA \orho_A) \bB^\dagger + \tr(\bA^\dagger \orho_A) \bB\,,\;\rho_s\right] \nonumber \\
& & - i \, \epsilon^2 \left[ Y_{1,1} \bB \bB^\dagger + Y_{2,2} \bB^\dagger \bB + Y_{1,2} \bB \bB + Y_{2,1} \bB^\dagger \bB^\dagger \; , \;\rho_s\right] \nonumber\\
& & + \epsilon \, \cL_B(\rho_s) \nonumber\\
& & + \epsilon^2 \cD_{(\Lambda_{1,1} \bB^\dagger + \Lambda_{2,1} \bB)}(\rho_s) + \epsilon^2 \cD_{(\Lambda_{1,2} \bB^\dagger + \Lambda_{2,2} \bB)}(\rho_s) \; .
\label{eq:L1L2resonnant}
\end{eqnarray}
\vspace{2mm}

The following example considers a combination of dispersive and resonant interaction, where $m=3$. The result on this example is new to our knowledge, and with the number of decoherence channels that have to be combined, it is not easily obtained from intuitive reasoning.

\subsection{Example: Two-photon pumping}\label{ssec:Ex2photon}

We illustrate the result of Section \ref{ssec:GeneralLambda} on the model of the experiment presented in \cite{theseJoachimCohen,TouzardOngoing}. Such system, following the theoretical proposal \cite{mirrahimi2014dynamically}, is a promising way towards dynamically protected quantum processors. The reservoir is based on a scheme that induces 2-photon loss at order $\epsilon^2$ on the target system, as a dominant effect. We show how our method can calculate a more precise reduced model taking into account \emph{all} the effects at order $\epsilon^2$, and thereby quantify the effects of potential 2-photon excitation and cross-Kerr nonlinearity on this reservoir.

The system is composed of two interacting cavities: the first cavity (fast subsystem, $\cH_A$) is driven by an electromagnetic field and exchanges energy with the environment, with an energy loss term dominant with respect to both the energy gain term and the electromagnetic field drive. This implies that only the  lowest energy level is  significantly populated, so we can model the cavity as a two level system, i.e.~a qubit with energy levels $\ket{g},\ket{e}$.
The second cavity (slow subsystem, $\cH_B$) weakly interacts with this qubit, with auxiliary pump fields matched such that the resonant interaction exchanges 1 energy quantum of $\cH_A$ with 2 energy quanta of $\cH_B$ \cite{theseJoachimCohen,TouzardOngoing}; an additional residual dispersive interaction (``cross-Kerr'') is unavoidable in this setup.

The system model thus writes as follows. As usual, $\bSx,\bSy,\bSz$ denote the standard Pauli matrices,
$\bSm=|g\rangle \langle e|$ and $\bSp=|e\rangle \langle g|$ the energy loss operator and energy gain operators respectively, all on $\cH_A$, and $\bb,\bb^\dag$ denote respectively the annihilation
and creation operators, on $\cH_B$.
The dynamics of the fast subsystem are described by the Lindbladian operator
\begin{equation} \label{lind A}
\mathcal{L}_A = - i u [ \bSy,\cdot] +\kappa_- \mathcal{D}_{\bSm}+\kappa_+ \mathcal{D}_{\bSp},
\end{equation}
where the coupling constants satisfy
\begin{equation} \label{orders 2 ph}
\frac{\kappa_+}{\kappa_-}\ll 1 \quad , \quad\frac{|u|}{\kappa_-}\ll 1.
\end{equation}
The interaction of the two systems is described by
\begin{eqnarray} \label{ham}
\Hint & = & g\bSp \otimes \bb^2 + g\bSm
\otimes\bb^{\dag 2} + \chi |e\rangle \langle e| \otimes \bb^\dag \bb \\
\nonumber & = & \bA_1 \otimes \bB_1^\dagger + \bA_2 \otimes \bB_2^\dagger + \bA_3 \otimes \bB_3^\dagger \, .
\end{eqnarray}
The fact that $\Hint$ is much weaker than $\mathcal{L}_A$ is expressed by $|g|,|\chi|\ll \kappa_+,u,\kappa_-$. Formally we can take units such that $\kappa_+,|u|,\kappa_-$ are of order $1$ or larger, and $g,\chi$ are of order $\epsilon \ll 1$. In particular, if $g=\chi=0$, then the two systems are independent. The fast system converges to the steady state
\newcommand{\rrba}{\bar{\rho}_A}
\begin{align*}
\rrba &= \frac{\bI + x_\infty \bSx + z_\infty \bSz}{2} \\
\text{with } x_\infty &= \frac{4 u(\kappa_+-\kappa_-) }{(\kappa_++\kappa_-)^2+8u^2} \;\; \text{ and } \;\;
z_\infty = \frac{\kappa_+^2-\kappa_-^2 }{(\kappa_++\kappa_-)^2+8u^2},
\end{align*}
while the slow target system $B$ does not move.

When $g,\chi$ are nonzero, the dynamics of the slow dynamics, approximately corresponding to the second cavity $\cH_B$, can be written in the form \eqref{eq:Lexpansion}. By simple computations this yields the
first-order dynamics of system $B$, given by the Zeno Hamiltonian:
\[
\epsilon \mathcal{L}_1(\rho_s)=-i\left[\chi \frac{1+z_{\infty}}{2}\bb^{\dag}\bb + g x_{\infty} \frac{\bb^{\dag 2}+\bb^2}{2},\; \rho_s \; \right].
\]

We now compute the second-order dynamics, choosing in equation \eqref{eq:InvLA} the gauge $\bG_{1,B}(\rho_s)=-\otau \tr_A \Big(\Lint(\orho_A\otimes\rho_s)\Big)$.
Algebraic computations, by solving for $\cL_A^{-1}$ with the Bloch equations for the qubit, yield
$X$ and $Y$ in the form:
$$\begin{array}{lllll}
\epsilon^2 \, X_{1,1}=&  \left(\frac{z_\infty}{2}(3x_\infty^2-2) - \frac{x_\infty^2}{2}+z_\infty^2-(z_\infty-1)\frac{\kappa_--\kappa_+}{\kappa_-+\kappa_+}\right)\; g^2 / (\kappa_--\kappa_+)\\
\epsilon^2 \, X_{2,2}=& \left(\frac{z_\infty}{2}(3x_\infty^2-2) + \frac{x_\infty^2}{2}-z_\infty^2+(z_\infty+1)\frac{\kappa_--\kappa_+}{\kappa_-+\kappa_+}\right)\; g^2 / (\kappa_--\kappa_+)\\
\epsilon^2 \, X_{3,3}=& \frac{z_\infty}{2}\left(z_\infty^2-x_\infty^2-1\right)\; \chi^2/ (\kappa_--\kappa_+)\\
\epsilon^2 \, X_{1,2}=& \left(\frac{z_\infty}{2}(3x_\infty^2-2) - \frac{\kappa_--\kappa_+}{\kappa_-+\kappa_+}\right) \; g^2   / (\kappa_--\kappa_+) \\
\epsilon^2 \, X_{1,3}=& x_\infty \left(z_\infty^2 - \frac{x_\infty^2}{4}-\frac{z_\infty}{2}+\frac{1}{2}\frac{\kappa_--\kappa_+}{\kappa_-+\kappa_+}\right)\; \chi g / (\kappa_--\kappa_+)\\
\epsilon^2 \, X_{2,3}=& x_\infty \left(z_\infty^2 - \frac{x_\infty^2}{4}+\frac{z_\infty}{2}-\frac{1}{2}\frac{\kappa_--\kappa_+}{\kappa_-+\kappa_+}\right) \; \chi g / (\kappa_--\kappa_+) \;\;;
\end{array}$$
\begin{align*}
\begin{array}{lll}
\epsilon^2 \, Y_{1,2} &= -\left(2z_\infty^2-x_\infty+ 2z_\infty \frac{\kappa_- - \kappa_+}{\kappa_- + \kappa_+} \right)\; g^2 / (4i(\kappa_- - \kappa_+))  \\
\epsilon^2 \, Y_{1,3} &=  \left(x_\infty - x_\infty z_\infty - x_\infty^3/2 - z_\infty^2 x_\infty - x_\infty \frac{\kappa_- - \kappa_+}{\kappa_- + \kappa_+} \right)\; g\chi / (4i(\kappa_- - \kappa_+)) \\
\epsilon^2 \, Y_{2,3} &=   \left(x_\infty + x_\infty z_\infty - x_\infty^3/2 - z_\infty^2 x_\infty + x_\infty \frac{\kappa_- - \kappa_+}{\kappa_- + \kappa_+} \right)\;  g\chi / (4i(\kappa_- - \kappa_+)) \; .
\end{array}
\end{align*}

Towards interpreting these expressions, we take into account the relative strengths of the couplings \eqref{orders 2 ph}. More precisely, with $\kappa_-=1+n_{\text{thermal}}$, $\kappa_+=n_{\text{thermal}}$
and $n_{\text{thermal}}\ll 1$, we define $\delta^2=\kappa_+/\kappa_- \approx n_{\text{thermal}}$ and $\eta=u/\kappa_-$, and we neglect the terms of order three or higher in $\delta,\eta$.

The Hamiltonian operator in \eqref{eq:L1L2Hamiltonnian} then reads
\[
\epsilon^2 \sum_{k,j}  Y_{k,j}\, \bB_k \bB_j^\dagger
\;\;=\;\; \frac{2i g \chi}{\kappa_-}\,  \eta\, \big(\bb^{\dag 3}\bb - \bb^\dag \bb^{ 3} \big) + \frac{8i g^2}{\kappa_-}\, \eta^2\, \big(\bb^{\dag 4}-\bb^4\big) 
 \]
up to terms of order at least three in $\delta,\eta$.
For the dissipative part, up to the same terms,
\[
\epsilon^2 X \;\; = \;\; \frac{1}{\kappa_-}
\begin{pmatrix}
\left(4 - 8 \delta^2 - 64 \eta^2\right) g^2& -32 \eta^2 g^2 &
  -8 \eta g \chi \\
 -32 \eta^2 g^2 &  4 \delta^2 g^2  &  0\\
 -8  \eta g \chi & 0 &
  \left(2 \delta^2 + 16 \eta^2\right)\chi^2
\end{pmatrix} \; .
\]
From this form, it is already clear that the dominant effect of the dissipation involves the two-photon annihilation operator. We next write $X=\Lambda \Lambda^{\dag}$, choosing $\Lambda^{\dag}$ as an upper triangular matrix:
\[
\epsilon \Lambda^{\dag}=
\frac{1}{\sqrt{\kappa_-}}
\begin{pmatrix}
2(1-\delta^2-8\eta^2) g & -16 \eta^2 g & -4 \eta \chi \\
0 & 2 \delta g & 0 \\
0 & 0 & \sqrt{2}\delta \chi
\end{pmatrix}
\]
so that the three dissipation channels are given by the operators
\begin{align*}
\begin{array}{lll}
\epsilon \bL_1&=\frac{1}{\sqrt{\kappa_-}}\, \big(\,2(1\text{-}\delta^2\text{-}8\eta^2)\, g\bb^2 -4 \eta\, \chi \bb^{\dag}\bb - 16 \eta^2\, g\bb^{\dag 2}  \,\big) \\
\epsilon \bL_2&=\frac{1}{\sqrt{\kappa_-}}\, 2\delta\, g \bb^{\dag 2}\\
\epsilon \bL_3&=\frac{1}{\sqrt{\kappa_-}}\, \sqrt{2}\delta\, \chi\bb^{\dag}\bb
\end{array}
\end{align*}
With the same approximation, the terms in $\cL_1$ involve $(1+z_{\infty})/2 = \delta^2 + 4 \eta^2$ and $x_{\infty}/2 = -2 \eta$. Wrapping up, the effects on the slow subsystem are thus, with units such that $\kappa_-=1$:
\begin{itemize}
\item At order $(g \eta)$ and $(g^2)$ respectively: a two-photon pumping Hamiltonian, in $\bb^{\dag 2}+\bb^2$, and a two-photon dissipation, with $\bL_1 \approx \bb^2$, precisely as intended in \cite{mirrahimi2014dynamically};
\item At order $(\chi \delta^2,\chi \eta^2)$: a Stark shift Hamiltonian $\bb^\dag \bb$, which just shifts the cavity frequency and can be compensated for;
\item At order $(g\chi \eta)$: a Hamiltonian in $(\bb^\dag \bb^{ 3} - \bb^{\dag 3}\bb)/i$, whose precise deformation effect would have to be investigated; and a modification of the two-photon dissipation channel $\bL_1$ by some dephasing effect, leading to terms like $\bb^2 \rho \bb^\dag \bb$.
\item At order $(\chi^2 \delta^2, g^2 \delta^2, g^2 \eta^2)$: a Hamiltonian effect in $(\bb^4-\bb^{\dag 4})/i$, that is essentially a 4-photon drive; two new dissipation channels, namely in $\bb^{\dag 2}$ and $\bb^\dag \bb$; and a further modification of $\bL_1$, now leading to terms like $\bb^2 \rho \bb^2$.
\end{itemize}
These are the dominant ones provided $|g|,|\chi| \ll |\eta|,|\delta| \ll 1$; once $g,\chi$ become comparable to $\eta,\delta$, effects of order $\epsilon^3$ might become important as well. The next step of our asymptotic expansion method would address this point; carrying out the related computations goes beyond the present paper.

Finally, we must emphasize that the reduced dynamics describes the evolution of the \emph{slow subsystem $\rho_S$. This $\rho_S$ does not exactly coincide with the state of subsystem $B$.} Indeed, the Kraus map $\cK_0+\epsilon\cK_1+...$ (not given here) expresses precisely how subsystem $B$ gets slightly hybridized with subsystem $A$ in order to obtain Markovian slow dynamics. The dynamics of just $\tr_A(\rho)$ would not be Markovian at this order of precision, and this is one reason that precludes a simple intuitive derivation of these results.


\section{Cascade interaction} \label{sec:CascSyst}
We consider in this section a different kind of interaction between the fast and the slow subsystem. Instead of the standard reciprocal Hamiltonian coupling of Section \ref{sec:HamSyst} we consider a unidirectional coupling which allows the output of the first system to feed the second system while forbidding the reverse process. The expression "cascaded system" was first introduced in \cite{CarmichaelPRL93}, see also e.g. \cite{GardinerZollerBook} for more details on this type of interaction. \ra{The network extension of the cascade structure can be found in \cite{GoughTAC09}, see \cite{GoughJMP10, NurdinPTRSA12} for its relation with adiabatic elimination.}

The master equation of a fast subsystem $A$ on a Hilbert space $\mathcal{H}_A$ coupled via cascaded interactions to a slow subsystem $B$ on a Hilbert space $\mathcal{H}_B$ is given by:
\begin{eqnarray}
\dotex \rho = \cL_A(\rho) +   \cD_{\ba + \epsilon\bb}(\rho) +\frac{\epsilon}{2}  [\ba^\dag  \bb- \ba \bb^\dag,\rho] + \epsilon^2 \cL_B(\rho)
 \nonumber
\\\qquad = \cL_A(\rho) + \cD_{\ba} (\rho) + \epsilon \Big(\ba [\rho,\bb^\dag] +  [\bb,\rho] \ba^\dag\Big) + \epsilon^2 \Big(\cD_{\bb}(\rho) + \cL_B(\rho) \Big)
\label{eq:Cinteraction}
\end{eqnarray}
where $\cL_A$ and $\cL_B$ are Lindbladian  super-operators acting only on $\cH_A$ and $\cH_B$, respectively, and where
 $\ba$ and $\bb$  are operators acting only on  $\cH_A$ and $\cH_B$ respectively. We assume that there exists a unique steady state density operator $\orho_A$ solution of $\cL_A(\rho) + \cD_{\ba} (\rho) = 0$. Note that unlike the Hamiltonian interaction case, the fast dynamics is given by the super-operator $\cL_A(\rho)$ but also by $\cD_{\ba}(\rho)$  due to the unidirectional coupling. The completely positive map $\coK_A$ of Remark~\ref{rmk:InvLindblad} is defined accordingly.

 We detail now the computation leading to the adiabatic elimination of the fast variables associated with the subsystem $A$ in the case of a cascaded system whose dynamics is given by \eqref{eq:Cinteraction}. The zero order approximation is readily given by $\cL_0(\rs) = 0$ and $\cK_0(\rs) = \orho_A \otimes \rs$.

The first order reduced dynamics can be directly computed using equation \eqref{eq:L1}. A straightforward calculation yields:
\begin{equation} \label{eq:CascLindOrd1}
\cL_1 (\rs) = \left[ \tr(\orho_A \ba^\dag) \bb - \tr(\ba \orho_A) \bb^\dag, \rs \right]
\end{equation}
The super-operator $\cK_1$ corresponding to the first order entanglement between the two subsystems is given by \eqref{eq:K1}. From this expression, with the gauge choice $\bG_{1,B}(\rs)= - \otau \tr_A \bigl(\cL_{int}(\orho_A \otimes \rs) \bigr)$ we get
\begin{equation}
\cK_1(\rs) = \otau \coK_A \left( \oa \hspace*{1mm} \orho_A \right) \otimes \left(\rs \bb^\dag - \bb^\dag \rs \right) + \otau \coK_A \left( \orho_A \oa^\dag \right) \otimes \left(\bb \rs - \rs \bb \right)
\end{equation}
where $\oa = \ba - \tr(\ba \orho_A)\bI_A$. We made this particular gauge choice in order to ensure $\tr_A(\cK(\rs))= \rs$ and therefore $\rs$ corresponds to the density operator of the subsystem $B$. This   gauge choice is different from the one of   Section \ref{sec:BipartSyst}. However   we show that  $\cK(\rs)=\cK_0(\rs) + \epsilon \cK_1(\rs)$ is also a completely positive trace-preserving map  up-to second order terms.  To exhibit such property, we express this map in an explicit Kraus operator form.

Using Lemma \ref{lem:incluNoyau} and Corollary \ref{cor:inclu_noyau} we get that $\coK_A \left( \oa \hspace*{1mm} \orho_A \right) = \coK_A \left( \oa \hspace*{1mm} \orho_A \right)\orho_A^{-1} \orho_A$ and $\coK_A \left( \orho_A \oa^\dag \right)= \coK_A \left( \orho_A \oa^\dag \right)\orho_A^{-1} \orho_A$ where $ \orho_A^{-1}$ stands for the Moore-Penrose pseudo-inverse of $\orho_A$. Therefore we derive the following Kraus operator form for the first order entanglement:
\begin{align}
\rho &=\cK_0(\rs) + \epsilon \cK_1(\rs) + O(\epsilon^2) = \bM (\orho_A \otimes \rs) \bM^\dag \\
\bM &= \biggl( I + \epsilon \otau \bigl( \coK_A \left( \orho_A \oa^\dag \right) \orho_A^{-1} \otimes b - \coK_A \left(\oa \, \orho_A  \right)\orho_A^{-1} \otimes \bb^\dag \bigr) \biggr)
\end{align}

 \subsection{Second order approximation}

 In order to compute the second order dynamics we proceed in the same manner as the Hamiltonian interaction case: first, we choose the gauge $\bG_{1,B}(\rs)= - \otau \tr_A \bigl(\cL_{int}(\orho_A \otimes \rs) \bigr)$ in order to facilitate the expression of $\cL_2$. Then, take the partial trace $\tr_A$  on equation \eqref{eq:Order2} leading to the cancellation of the unknown term $\cL_A(\cK_2(\rs))$. Using $\tr_A(\cL_A(\bullet)) \equiv 0$ with $ \tr_A(\cK_0(\bullet)) \equiv \bullet$ and the fact that $\tr_A ( \cK_1(\bullet)) \equiv 0$ due to our choice of the gauge degree of freedom we get:
\begin{equation}
\cL_2(\rs) =\tr_A \Big(\ba [\bb^\dag,\cK_1(\rs)] +  [\cK_1(\rs),\bb] \ba^\dag\Big) +  \bigl(\cD_{\bb} (\rs) + \cL_B(\rs) \bigr)
\end{equation}
Note that the term $\cD_{\bb}$ is intrinsically related to the cascaded structure while $\cL_B$ represents the slow dynamics on the subsystem $B$
and is independent of the structure (in particular, it could satisfy $\cL_B \equiv 0$).
Therefore the term $\cL_B$ must be treated independently from the other terms. For this reason we impose $\cL_2 = \widetilde{\cL_2} + \cL_B$ with
\begin{equation}
\widetilde{\cL_2} = \tr_A\Big(\ba [\bb^\dag,\cK_1(\rs)] +  [\cK_1(\rs),\bb] \ba^\dag\Big) + \cD_{\bb}
\end{equation}
A direct expansion leads to the following expression:
\begin{align}
\widetilde{\cL_2}(\rs)= \cD_{x_1 \bb + y_1 \bb^\dag}(\rs) + \cD_{x_2 \bb + y_2 \bb^\dag}(\rs) \ra{+\frac{\alpha^*-\alpha}{2}\left[\bb^\dag \bb - \bb \bb^\dag \, , \, \rs \right]} \label{eq:CascLindOrd2}
\end{align}
where $\{x_1, x_2, y_1, y_2 \} \in \mathbb{C}^4$ are solutions of the set of equations:
\begin{align}\left\{
\begin{array}{r c l}
|x_1|^2 + |x_2|^2 & = & \alpha+\alpha^*+1 \\
|y_1|^2 + |y_2|^2 & = & \alpha+\alpha^* \\
x_1 y_1^* + x_2 y_2^* & = & -2 \beta
\end{array}\right. \label{eq:CascSystCond}
\end{align}
with $\alpha = \otau \tr \left(\ba   \coK_A \left( \orho_A \oa^\dag \right) \right)$ and  $ \beta = \otau \tr (\ba^\dag \coK_A(\orho_A \oa^\dag))$.

This set of equation can be viewed as the norm and the scalar product between the two vectors $\{x_1, x_2\}^\intercal \in \mathbb{C}^2$ and $\{y_1, y_2\}^\intercal \in \mathbb{C}^2$. Therefore, there exists a (non-unique) solution to \eqref{eq:CascSystCond} if and only if holds:
\begin{align}
\alpha+\alpha^* &\geq 0  \label{eq:CascCond1} \\
(\alpha+\alpha^*+1)(\alpha+\alpha^*) &\geq 4|\beta|^2 \label{eq:CascCond2}
\end{align}
Using a slight adaptation of Lemma \ref{Lem:Xipos}, we get that condition \eqref{eq:CascCond1} holds for any system described by the master equation \eqref{eq:Cinteraction}. We conjecture that \eqref{eq:CascCond2} holds for any system. The non-uniqueness of the solutions is expected due to the fact that the Lindblad decomposition \eqref{eq:CascLindOrd2} is also not unique.

In the case of cascade interaction, the second order reduced dynamics is given by:
\begin{equation}\label{eq:L1L2Cascade}
  \dotex \rho_s = \epsilon \left[\, \tr(\orho_A \ba^\dag) \bb - \tr(\ba \orho_A) \bb^\dag, \rs \, \right] + \epsilon^2 \biggl( \cL_B(\rs) +  \cD_{x_1 \bb + y_1 \bb^\dag}(\rs) + \cD_{x_2 \bb + y_2 \bb^\dag}(\rs) \biggr)
\end{equation}

 \subsection{Arbitrary system with a squeezed drive} \label{subsec:Squeezed}

To illustrate our results, we consider a driven linear cavity  producing a  squeezed  output field, unidirectionally "feeding" the slow subsystem. Such kind system was first studied in \cite{GardinerPRL86} considering the slow subsystem as a qubit and shows how one can engineer the coherence times of this qubit using a squeezed  input field. It has recently been realized experimentally in \cite{SiddiqiNat13} validating the theoretical results. Our aim is to show through this example, how to apply our method, emphasizing that it doesn't need the specification of the slow subsystem and readily retrieve the known results of the adiabatic elimination of the linear cavity when the slow subsystem is a qubit.

 Using \cite{GoughTAC09} , we get the following dynamics for the system:
\begin{eqnarray} \label{eq:ExampleCasc}
\frac{d}{dt}\rho = g[\ba^2 - {\ba^\dag}^2, \rho] + \kappa \cD_{\ba}(\rho) + \epsilon \sqrt{\kappa}  \biggl( \ba[\rho, \bb^\dag]+[\bb,\rho]\ba^\dag \biggr) + \epsilon^2 \cD_{\bb}(\rho \bigr)
 \end{eqnarray}
where $\ba$ is the annihilation operator on the cavity, $\bb$ is an operator on the unspecified slow subsystem (usually the annihilation operator for a cavity and the energy loss operator for a qubit). The term $g[\ba^2 - {\ba^\dag}^2, \rho]$ corresponds to the standard squeezing Hamiltonian by an appropriate phase choice for $\ba$ ($g$ is real here). The parameter  $\kappa$ is the dissipation rate of the cavity. The time-scale separation is given by $\kappa \gg \epsilon^2$. This model is valid under the assumption $\kappa > 4g$, otherwise the cavity subsystem is unstable and its energy grows to infinity (and therefore additional phenomenon such as Kerr effect have to be taken into account). As any dynamics $\cL_B$ on the slow subsystem plays no role for second order   computations, we assume for simplicity $\cL_B \equiv 0$.

To derive the reduced dynamics up to second  order on this example, we use formulas \eqref{eq:CascLindOrd1} and \eqref{eq:CascLindOrd2} where the fast system $A$ is the cavity and the slow system $B$ is unspecified. Therefore we have only to compute the three coefficients $\tr(\sqrt{\kappa}\ba \orho_A)$, $\alpha = \kappa\otau \tr \left(\ba   \coK_A \left( \orho_A \oa^\dag \right) \right)$ and  $ \beta = \kappa\otau \tr (\ba^\dag \coK_A(\orho_A \oa^\dag)$. As the fast dynamics of \eqref{eq:ExampleCasc} corresponds to the evolution of a linear quantum harmonic oscillator, these coefficients are easier to compute by taking their Heisenberg representation counterpart i.e:
\begin{align*}
\tr(\ba \orho_A) = \tr\left(\ba ~e^{\infty \cL_A}(\rho_0)\right)=  \tr(e^{\infty \cL_A^*}(\ba) ~\rho_0) \\
\alpha = \kappa \tr \left(e^{\infty \cL_A^*}\left( \oa^\dag \int_0^\infty e^{t \cL_A^*} (\ba)~dt \right)   \rho_0 \right) \\
\beta = \kappa  \tr \biggl( e^{\infty \cL_A^*} \left( \oa^\dag \int_0^\infty  e^{t \cL_A^*} (\ba^\dag)   dt \right)  \rho_0 \biggr)
\end{align*}
where $\rho_0$ is the initial state, ${\cL_A}^*$ is the dual of $\cL_A$. With a small abuse of notation, $e^{\infty {\cL_A}^*} $ means $\lim_{t\to \infty} e^{t {\cL_A}^*}$.

Then, some usual  computations lead to (see Appendix \ref{apx:CalcSqueeze} for the key elements of the computation):
\begin{align*}
\tr(e^{\infty \cL_A^*}(\ba) ~\rho_0) = 0 \\
\alpha = \frac{32\kappa^2 g^2}{((\kappa+4g)(\kappa-4g))^2} \\
\beta = - \frac{64 g^3 \kappa + 4g \kappa^3}{((\kappa+4g)(\kappa-4g))^2}
\end{align*}
In this case one can check that $(\alpha+\alpha^*+1)(\alpha+\alpha^*) = 4|\beta|^2$  verifies condition \eqref{eq:CascCond2}. It is therefore possible to solve \eqref{eq:CascSystCond} and the simplest solution is given by
$ x_2 = y_2 = 0, x_1 = \sqrt{2\alpha+1}, y_1 = -\sqrt{2\alpha} $. Noting that $\alpha \in \mathbb{R}$, we get the second order reduced dynamics:
\begin{equation}
\frac{d}{dt}\rho_s = \epsilon^2 \cD_{\sqrt{2\alpha+1}\bb - \sqrt{2\alpha}\bb^\dag}(\rs)
\end{equation}
Recovering the result of  \cite{GardinerPRL86} when the slow subsystem is a qubit and $\bb = \bSm$ the energy loss operator.

\section{Conclusion}

We have presented an adiabatic elimination technique for bipartite quantum systems characterized by one subsystem converging rapidly towards a unique steady state. By considering the slow dynamics as a perturbation of order $\epsilon$, we explicitly obtain, up to second order terms in $\epsilon$, the reduced slow dynamics as a Lindblad master equation. Our explicit formulas directly apply to reservoir engineering settings e.g.~for computing perturbations on the engineered reservoir, or the residual dissipation added when first-order Hamiltonian Zeno dynamics is introduced towards performing gates on top of the reservoir.

The asymptotic expansion method, developed in this paper on bipartite systems, seems in fact to be applicable for any type of quantum system with two time-scales.  In~\cite{AzouitCDC15,theseJoachimCohen} for instance it is exploited in the context of fast convergence towards a decoherence free space, with some preliminaries results that would have to be completed. In this case, the fast dissipative dynamics drives the system towards a particular subspace inside the Hilbert space. Hence the slow/fast decomposition involves a Cartesian product rather than a tensor product as presented here.

We conjecture that this geometric adiabatic elimination technique will lead to formulas conveying a physical interpretation of the reduced model at higher orders as well: by ensuring the Lindblad form of the reduced dynamics, and by providing a completely-positive trace-preserving mapping from reduced model to actual system states. \ra{The key element to do so would be to properly choose the gauge degree of freedom at each order. } This is the topic of ongoing research.

\appendix

\section{Inverse of Lindblad super-operators  via Kraus maps}

Consider the Lindblad master equation $\dotex \rho =  \cL(\rho)$ where $\rho$ is a density operator on  a finite dimensional Hilbert space $\cH$.  Assume that for any operator $\bX$ on $\cH$,
$e^{t\cL} (\bX)$ converges exponentially towards a fixed point depending on $\bX$. This means that there exists a  complete-positive and trace-preserving map   $\coR$ such that
$\lim_{t\mapsto +\infty}   e^{t\cL} (\bX) = \coR(\bX)$.  Thus we   have $\cL(\coR(\bX))\equiv 0\equiv \coR(\cL(\bX))\equiv 0$ since $ e^{t\cL} (\coR(\bX))\equiv \coR(\bX) \equiv \coR(e^{t\cL}(\bX))$.

\begin{lemma} \label{lem:InvLindblad}
There exists $\bar{\tau}>0$ such that  the super-operator  $\coK$ sending  operator $\bX$ to
$$
\coK(\bX) = \frac{1}{\bar\tau} \int_0^{+\infty}  e^{t\cL}\Big(\bX- \coR(\bX)\Big)~dt + \coR(\bX)
$$
is a linear, trace-preserving and completely positive mapping with $$- \cL\big( \bar\tau\coK(\bX)\big) = \bX- \coR(\bX).$$
\end{lemma}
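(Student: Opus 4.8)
The plan is to verify the four asserted properties in turn, observing that linearity, the inversion identity, and trace preservation hold for \emph{every} $\bar\tau>0$, while complete positivity is the genuine content that selects $\bar\tau$. Linearity is immediate, since $\bX\mapsto e^{t\cL}(\bX)$, $\coR$, and integration are all linear. For the inversion identity I would write $\bar\tau\coK(\bX)=\int_0^{\infty}e^{t\cL}(\bX-\coR(\bX))\,dt+\bar\tau\coR(\bX)$ and apply $\cL$: the second summand vanishes because $\cL(\coR(\bX))\equiv 0$, and for the first I use that $\cL$ is a bounded linear super-operator on a finite-dimensional space (so it commutes with the integral) together with $\cL\,e^{t\cL}=\tfrac{d}{dt}e^{t\cL}$, giving $\cL\big(\int_0^\infty e^{t\cL}(\bX-\coR(\bX))\,dt\big)=\big[e^{t\cL}(\bX-\coR(\bX))\big]_0^\infty$. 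Exponential convergence $e^{t\cL}\to\coR$ and idempotency $\coR\,\coR=\coR$ (obtained from $e^{t\cL}\coR\equiv\coR$ by letting $t\to\infty$) make the upper limit $\coR(\bX-\coR(\bX))=0$ and the lower limit $\bX-\coR(\bX)$, so $-\cL(\bar\tau\coK(\bX))=\bX-\coR(\bX)$. For trace preservation I would use $\tr(\cL(\cdot))\equiv0$, so $\tr(e^{t\cL}(\bY))=\tr(\bY)$ for all $t$; with $\bY=\bX-\coR(\bX)$ and $\coR$ trace preserving one gets $\tr(\bY)=0$, hence the integral term is traceless and $\tr(\coK(\bX))=\tr(\coR(\bX))=\tr(\bX)$.

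The hard part, and the step that fixes $\bar\tau$, is complete positivity, for which I would pass to Choi matrices on $\cH\otimes\cH$. Writing $C_t=(e^{t\cL}\otimes\cI)(\ketbra{\Omega}{\Omega})$ and $C_\infty=(\coR\otimes\cI)(\ketbra{\Omega}{\Omega})$ with $\ket{\Omega}=\sum_i\ket{i}\otimes\ket{i}$, the fact that $e^{t\cL}$ and $\coR$ are CPTP gives $C_t\succeq0$, $C_\infty\succeq0$, and $C_t\to C_\infty$ exponentially. By linearity of the Choi map, the Choi matrix of $\coK$ is $C_{\coK}=C_\infty+\tfrac1{\bar\tau}D$ with $D=\int_0^\infty(C_t-C_\infty)\,dt$ a fixed Hermitian matrix, so $\coK$ is CP iff $C_\infty+\tfrac1{\bar\tau}D\succeq0$. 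The key observation is that $D$ is positive \emph{relative to} $C_\infty$: for any $\ket{\psi}\in\ker C_\infty$ one has $\bra{\psi}C_\infty\ket{\psi}=0$, hence $\bra{\psi}D\ket{\psi}=\int_0^\infty\bra{\psi}C_t\ket{\psi}\,dt\ge0$ since each integrand is nonnegative; moreover if this vanishes then $\bra{\psi}C_t\ket{\psi}=0$ for all $t$, and $C_t\succeq0$ forces $C_t\ket{\psi}=0$, whence $D\ket{\psi}=0$.

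I would close with a degenerate-perturbation lemma: if $C_\infty\succeq0$, $D$ is Hermitian, $\bra{\psi}D\ket{\psi}\ge0$ on $\ker C_\infty$, and $D$ annihilates those $\ket{\psi}\in\ker C_\infty$ with $\bra{\psi}D\ket{\psi}=0$, then $C_\infty+\epsilon D\succeq0$ for all small $\epsilon>0$. One splits $\ker C_\infty=K_0\oplus K_1$ with $D|_{K_0}=0$ and $D|_{K_1}\succeq cI$ for some $c>0$; the $K_0$ block and all its off-diagonal couplings vanish, and on $K_1\oplus(\ker C_\infty)^\perp$ a Schur-complement estimate against the strictly positive block $C_\infty|_{(\ker C_\infty)^\perp}$ shows the correction is $O(\epsilon^2)$ while the gain on $K_1$ is order $\epsilon$. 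Choosing $\epsilon=1/\bar\tau$ small, i.e.\ $\bar\tau$ large, yields $C_{\coK}\succeq0$. The main obstacle I anticipate is precisely this degeneracy of $\ker C_\infty$: because $C_\infty$ (the Choi matrix of the projection $\coR$) is never strictly positive, a naive large-$\bar\tau$ argument fails, and it is the structural fact that $D\ket{\psi}=0$ on the null directions of $\bra{\psi}D\ket{\psi}$ — itself forced by the genuine positivity of the semigroup Choi matrices $C_t$ — that makes the perturbation argument go through.
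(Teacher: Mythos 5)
Your proof is correct and follows essentially the same route as the paper's: your $C_\infty$ and $D$ are exactly the paper's matrices $R$ and $M$ (the paper's quadratic-form computation with an arbitrary ancilla is an inline proof of the Choi-matrix criterion you invoke by name), your three structural facts — $C_\infty\succeq 0$, $\bra{\psi}D\ket{\psi}\geq 0$ on $\ker C_\infty$, and $D\ket{\psi}=0$ whenever that form vanishes — are precisely the paper's three summary points, and your degenerate-perturbation lemma with the $K_0\oplus K_1\oplus(\ker C_\infty)^{\perp}$ splitting and Schur-complement estimate is the paper's Lemma~\ref{lem:Slem} with $\epsilon=1/\bar\tau$. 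The only difference is cosmetic: you obtain the kernel facts directly from positive semidefiniteness of the Choi matrices $C_t$, whereas the paper derives the same facts via an explicit Kraus decomposition of $e^{t\cL}$.
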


\begin{remark} \label{rmk:InvLindblad}
  When  for any initial density operator the solution of  $\dotex \rho =  \cL(\rho)$  converges toward a unique density operator $\overline{\rho}$, we have $\coR(\bX)=\tr(\bX) \overline{\rho}$. In this case,  for any given operator $\bW$ with $\tr(\bW)=0$,  the general solution of  $-\cL(\bX) = \bW$   reads
 $$\bX= \int_0^{+\infty}  e^{t\cL}(\bW)~dt + \lambda \overline{\rho} = \bar\tau \coK(\bW) + \lambda \overline{\rho}$$ where $\lambda$ is an arbitrary complex number. Moreover $\bX=\bar\tau \coK(\bW)$ is the unique solution  with zero trace.
\end{remark}

\begin{proof}
Due to exponential convergence of  $e^{t\cL} (\bX)$  towards $\coR(\bX)$, the indefinite integral
$$
\coM(\bX)\triangleq \int_0^{+\infty}  e^{t\cL}\Big(\bX- \coR(\bX)\Big)~dt
$$
is absolutely convergent. Since  $\dotex  e^{t\cL}\Big(\bX- \coR(\bX)\Big)= \cL\left(e^{t\cL}\Big(\bX- \coR(\bX)\Big)\right)$, we have $\cL(\coK(\bX)) = -\frac{\bX- \coR(\bX)}{\bar\tau}$.    Since, for each $t \geq 0$ the propagator  $e^{t\cL}$ is trace preserving and $\tr(\bX)=\tr\left(\coR\bX\right)$,  simple computations yield $\tr(\coM(\bX))=0$ and thus  $\tr(\coK(\bX))=\tr(\bX)$.

To prove complete-positivity, consider the extension of $\cL$, $\coK$, $\coR$ on the tensor product $\cH\otimes \widetilde{\cH}$ where $\widetilde{\cH}$ is any Hilbert space of finite dimension.  Let us prove that for $\bar\tau$ large enough,  such extension of $\coK$ is  non-negative, i. e., that for any $\ket{\Phi},\ket{\Psi}\in\cH\otimes \widetilde{\cH}$, we have
$
\bket{\Psi \Big| \coK\left(\ketbra{\Phi}{\Phi}\right)\Big| \Psi } \geq 0
.
$
Consider an Hilbert basis $\left(\ket{n}\right)_{1\leq n \leq d}$ of $\cH$ whose dimension is denoted by $d$. Take
$$
\ket{\Phi}= \sum_{n=1}^d \ket{n} \otimes \ket{\phi_n}, \quad \ket{\Psi}= \sum_{\nu=1}^d \ket{\nu} \otimes \ket{\psi_\nu}
$$
where, for each $n$ and $\nu$ in $\{1,\ldots, d\}$, $\ket{\phi_n},\ket{\psi_\nu} \in\widetilde{\cH}$.  Then standard computations give
$$
\bket{\Psi \Big| \coM\left(\ketbra{\Phi}{\Phi}\right)\Big| \Psi } =
 \sum_{n',\nu',n,\nu=1}^{d} z_{n',\nu'}^* ~M_{n',\nu',n,\nu} ~ z_{n,\nu}
$$
with $z_{n,\nu}= \braket{\phi_n}{\psi_\nu}$ and
$$
M_{n',\nu',n,\nu} = \int_0^{+\infty}\bket{\nu' \Big| e^{t\cL}(\ketbra{n'}{n}) - \coR(\ketbra{n'}{n})\Big| \nu}~dt
.
$$
Similarly
$$
\bket{\Psi \Big| \coR\left(\ketbra{\Phi}{\Phi}\right)\Big| \Psi } = \sum_{n',\nu',n,\nu=1}^{d} z_{n',\nu'}^* ~R_{n',\nu',n,\nu} ~ z_{n,\nu}
$$
with  $R_{n',\nu',n,\nu} = \bket{\nu' \Big|\coR(\ketbra{n'}{n})\Big| \nu}$.
This means that, defining the $d^2$-dimensional vector $z=(z_{n,\nu})_{n,\nu\in\{1,\ldots, d\}}$  and the $d^2\times d^2$ Hermitian matrices  $M=\Big(M_{n',\nu',~n,\nu}\Big)$  and $R=\Big(R_{n',\nu',~n,\nu}\Big)$ ,  we have  the following quadratic forms
$$
\bket{\Psi \Big| \coM\left(\ketbra{\Phi}{\Phi}\right)\Big| \Psi }= z^\dag M z, \quad \bket{\Psi \Big| \coR\left(\ketbra{\Phi}{\Phi}\right)\Big| \Psi }= z^\dag R z
$$
where $z$ depends on $\ket\Phi$ and $\ket\Psi$,   where  $M$ and $R$ depend only on $\coM$ and $\coR$.
We have thus  to prove that there exists $\bar\tau >0$ such that $M+\bar\tau R \geq 0$.  Since $\coR$ is a complete-positive map,  the $d^2\times d^2$ Hermitian matrix $R $ is non-negative.

Take $z$, such that $R z=0$. Take $T>0$. We have
\begin{align*}
 \bket{\Psi \Big| \int_0^{T}  e^{t\cL}\Big(\ketbra{\Phi}{\Phi} - \coR(\ketbra{\Phi}{\Phi} )\Big)~dt \Big| \Psi }
 &=\bket{\Psi \Big| \int_0^{T}  e^{t\cL}\Big(\ketbra{\Phi}{\Phi}\Big)~dt \Big| \Psi } \\
 &= \int_0^{T} \sum_{n,\nu,n',\nu'} z_{n',\nu'}^* \bket{\nu' \Big| e^{t\cL}(\ketbra{n'}{n}) \Big| \nu} z_{n,\nu}~dt
\end{align*}
since  $e^{t\cL}\Big(\coR(\ketbra{\Phi}{\Phi} ) \Big)=\coR(\ketbra{\Phi}{\Phi} ) $ and  \ $\bket{\Psi \Big| \coR(\ketbra{\Phi}{\Phi} ) \Big| \Psi }= z^\dag R z=0$.
Since for each $t\geq 0$, $e^{t\cL}$ is completely positive, then there exists a Kraus decomposition
$$
e^{t\cL}(\bX)= \sum_{\mu} \bW_{\mu,t} \bX \bW_{\mu,t}^\dag
$$
with operators $\bW_{\mu,t}$ on $\cH$ such that $\sum_{\mu} \bW_{\mu,t}^\dag \bW_{\mu,t} = \bI$.
We have
$$
 \bket{\nu' \Big| e^{t\cL}(\ketbra{n'}{n}) \Big| \nu}= \sum_\mu  \bket{\nu' \Big| \bW_{\mu,t}\Big| n'}  \bket{n \Big| \bW^\dag_{\mu,t}\Big| \nu}
 .
$$
Consequently
$$
\sum_{n,\nu,n',\nu'} z_{n',\nu'}^* \bket{\nu' \Big| e^{t\cL}(\ketbra{n'}{n}) \Big| \nu} z_{n,\nu}
= \sum_{\mu} \left| \sum_{n,\nu=1}^{d} \bket{ n\Big| W_{\mu,t}^\dag \Big | \nu} z_{n,\nu}\right|^2
.
$$
Since $z^\dag M z$ is the limit when $T$ tends to $+\infty$ of
$$
\bket{\Psi \Big| \int_0^{T}  e^{t\cL}\Big(\ketbra{\Phi}{\Phi}\Big)~dt \Big| \Psi },
$$
we have for any $T>0$
$$
z^\dag M z \geq \sum_{\mu} \int_{0}^T\left| \sum_{n,\nu=1}^{d} \bket{ n\Big| W_{\mu,t}^\dag \Big | \nu} z_{n,\nu}\right|^2~dt
\geq 0
$$
that is, M is non-negative definite. In particular, if we assume that  $z^\dag M z=0$. The above  inequality implies that for any $t >0$
$$
\sum_{n,\nu=1}^{d} \bket{ n\Big| W_{\mu,t}^\dag \Big | \nu} z_{n,\nu}=0
.
$$
Recall that, by assumption,
$
\sum_{n,\nu}  \bket{\nu' \Big| \coR(\ketbra{n'}{n})\Big| \nu}~ z_{n,\nu} =0$ for any $n',\nu'\in\{1,\ldots, d\}$. Consequently
\begin{eqnarray*}
  \sum_{n,\nu}  \left(\int_0^{T}\bket{\nu' \Big| e^{t\cL}(\ketbra{n'}{n}) - \coR(\ketbra{n'}{n})\Big| \nu}~dt \right) z_{n,\nu}
\\= \sum_{n,\nu}  \left(\int_0^{T}\bket{\nu' \Big| e^{t\cL}(\ketbra{n'}{n}) \Big| \nu}~dt \right) z_{n,\nu}
\\= \int_0^{T} \sum_\mu  \bket{\nu' \Big| \bW_{\mu,t} \Big| n'} \left(  \sum_{n,\nu} \bket{ n\Big| W_{\mu,t}^\dag \Big | \nu} z_{n,\nu}\right) =0
.
\end{eqnarray*}
Thus for any $z$ such that $R z=0$ and $z^\dag M z=0$, we have necessarily $Mz=0$ by taking the limit for $T$ tending to $+\infty$.

To summarize we have shown that
\begin{enumerate}
  \item $R \geq 0$ ;
  \item if  $z^\dag R z= 0$ then $z^\dag M  z \geq 0$;
  \item  if  $z^\dag R z=z^\dag M  z = 0$ then $M z=0$.
\end{enumerate}
According to Lemma~\ref{lem:Slem} proved below, there exists $\bar\tau >0$ such that $M + \bar\tau R \geq 0$.

\end{proof}

\begin{lemma} \label{lem:Slem}
Consider two Hermitian matrices of same dimension $R$ and $M$ such that $R$ is non negative,  such that $z^\dag R z=0$ implies that $z^\dag M z \geq 0$, and such $z^\dag R z=z^\dag M z = 0$ implies $Mz =0$. Then for $\tau\geq 0$ large enough, $M+ \tau R \geq 0$.
\end{lemma}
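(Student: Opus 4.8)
The plan is to decompose the ambient space according to the kernel of $R$ and then to isolate the common kernel of $M$ and $R$, using hypotheses (2) and (3) together to show that $M$ is strictly positive on a suitable complement. First I would set $N = \ker R$ and write $V = N \oplus N^\perp$. Since $R$ is Hermitian and non-negative, its restriction to $N^\perp$ is positive definite, so $v^\dagger R v \geq \mu \|v\|^2$ for all $v \in N^\perp$, where $\mu > 0$ is the smallest nonzero eigenvalue of $R$. Because $z^\dagger R z = 0$ for every $z \in N$, hypothesis (2) says that the Hermitian form $z \mapsto z^\dagger M z$ is non-negative on $N$; let $W$ denote its kernel inside $N$, i.e.\ the set of $z \in N$ with $z^\dagger M z = 0$. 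Hypothesis (3) then upgrades this: any such $z$ also satisfies $M z = 0$, so that $W = \ker M \cap \ker R$ and $M$ is strictly positive on the orthogonal complement $N \ominus W$, say $u^\dagger M u \geq \nu \|u\|^2$ for $u \in N \ominus W$ with $\nu > 0$.

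The main computation is then to expand the quadratic form of $M + \tau R$ in the orthogonal splitting $V = W \oplus (N \ominus W) \oplus N^\perp$. Writing $z = w + u + v$ accordingly and using $Mw = 0$, $Rw = 0$ and $Ru = 0$, every cross term involving $w$ drops out, leaving $z^\dagger(M+\tau R)z = u^\dagger M u + 2\,\mathrm{Re}(u^\dagger M v) + v^\dagger M v + \tau\, v^\dagger R v$. I would bound the indefinite cross term by Young's inequality, $|2\,\mathrm{Re}(u^\dagger M v)| \leq \tfrac{\nu}{2}\|u\|^2 + \tfrac{2\|M\|^2}{\nu}\|v\|^2$, and use $v^\dagger M v \geq -\|M\|\,\|v\|^2$ and $\tau\, v^\dagger R v \geq \tau\mu\|v\|^2$, to obtain $z^\dagger(M+\tau R)z \geq \tfrac{\nu}{2}\|u\|^2 + \big(\tau\mu - \tfrac{2\|M\|^2}{\nu} - \|M\|\big)\|v\|^2$. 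Choosing $\tau$ large enough to make the coefficient of $\|v\|^2$ non-negative yields $M + \tau R \geq 0$, with the explicit threshold $\tau_0 = \tfrac{1}{\mu}\big(\tfrac{2\|M\|^2}{\nu} + \|M\|\big)$.

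The delicate point, and the step I expect to be the crux, is the identification of the degenerate subspace $W$: it is precisely here that both hypotheses are used together, (2) to secure non-negativity of $M$ on $\ker R$ and (3) to guarantee that the directions where this restricted form degenerates lie in the full kernel of $M$, not merely in the kernel of its compression $P_N M P_N$ to $N$. This is exactly what forces the $w$-cross-terms to vanish and lets the estimate close. An alternative route would be a compactness argument: if $M + k R \not\geq 0$ for every integer $k$, pick unit vectors $z_k$ with $z_k^\dagger(M+kR)z_k < 0$, deduce $z_k^\dagger R z_k \to 0$, and extract a limit $z^*$ with $R z^* = 0$, $z^{*\dagger} M z^* = 0$, hence $M z^* = 0$ by (3). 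However, converting this into a contradiction requires a second-order (rate) analysis of the forms near $z^*$, so I would favour the explicit decomposition above, which is cleaner and additionally produces a usable value of $\tau_0$.
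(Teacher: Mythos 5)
Your proof is correct and takes essentially the same route as the paper: your orthogonal splitting $W \oplus (N \ominus W) \oplus N^\perp$, with $W = \ker M \cap \ker R$ identified by combining hypotheses (2) and (3), is exactly the paper's three-block unitary decomposition (blocks $0$, $\bar A > 0$, $D > 0$), and in both arguments hypothesis (3) is used for precisely the same purpose — to show the degenerate directions lie in the full kernel of $M$, so that all couplings with that subspace vanish. The only difference is in the final estimate: the paper disposes of the remaining coupling between $N \ominus W$ and $N^\perp$ via a Schur-complement-type criterion cited from Bhatia's book, whereas you close with a Young-inequality bound, which is more self-contained and yields the explicit threshold $\tau_0 = \frac{1}{\mu}\bigl(\frac{2\|M\|^2}{\nu} + \|M\|\bigr)$.
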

\begin{proof}
  Up to a unitary transformation, we have  the block decomposition associated with $\ker R$ and $\ker R^\perp$:
  $$
   M= \left(\begin{array}{cc}
       A & C^\dag \\
       C & B \\
     \end{array}\right), \quad
     R=  \left(\begin{array}{cc}
       0 & 0 \\
       0 & D \\
     \end{array}\right),
   \quad
   z=  \left(\begin{array}{c}
       x \\
       y \\
     \end{array}\right),
  $$
  with  $A$, $B$ and $D$ Hermitian matrices with $D>0$.
  For any $z$ such that  $z^\dag R z=0$ we have $z^\dag M z \geq 0$, this means that $A\geq 0$. Up to some unitary transformation on $A$ only, we can always assume the following  sub-block decomposition for $A$, $C$   and $x$
  $$
   A=  \left(\begin{array}{cc}
       0 & 0 \\
       0 & \bar A \\
     \end{array}\right),
     \quad
      C=  \left(\begin{array}{cc}
       \tilde C & \bar C \\
     \end{array}\right),
     \quad
     x=  \left(\begin{array}{c}
       \tilde x \\
       \bar x \\
     \end{array}\right),
  $$
  with $\bar A >0$.  According to these block decompositions, $z^\dag R z=z^\dag M z = 0$ means that $y=0$ and $\bar x=0$ with $\tilde x$ arbitrary.  But $Mz=0$ means  that
  $$
  \left(\begin{array}{cc}
       \tilde C & \bar C \\
     \end{array}\right)  \left(\begin{array}{c}
       \tilde x \\
       0 \\
     \end{array}\right)= \tilde C  \tilde x =0
     $$
     for all $\tilde x$. Thus $\tilde C=0$. To summarize, up to a unitary transformation, we have the following decomposition for $M$ and $R$
     $$
     M= \left(\begin{array}{ccc}
      0 & 0 & 0 \\
       0 & \bar A & \bar C^\dag \\
       0& \bar C & B \\
     \end{array}\right), \quad
     R=  \left(\begin{array}{ccc}
       0 & 0 &0 \\
       0 & 0 &0 \\
       0 & 0 &D \\
     \end{array}\right)
     $$
     with $\bar A>0$ and $D >0$. For $\tau$ large enough $ \tau \bar A \geq \bar C ^\dag D^{-1} \bar C$ and thus   $M+\tau R \geq 0$ (see, e.g, \cite[Theorem 1.3.3, page 14]{BhatiaBook2007bis}).
\end{proof}



\ra{We end this section by proving a factorization property with $\orho_A$ for the result of $\cL_A^{-1}$ used after \eqref{eq:InvLA} and then in Section \ref{sec:HamSyst}.} This property is a direct consequence of the following two short lemmas, first introduced in \cite{AzouitIFAC16}. For the sake of self-completeness we also present their proof. Then, we use these lemmas to derive Corollary \ref{cor:inclu_noyau} in the particular case of composite systems with a cascaded structure, as presented in Section \ref{sec:CascSyst}.

\begin{lemma} \label{lem:trivial}
Let $\cL_A(\orho_A) = 0$, with $\cL_A$ of the form \eqref{eq:DefLind_A} and $\orho_A$ a density operator. Then for any $\ket{\nu} \in \ker(\orho_A)$ we have $\sqrt{\orho_A} \bL_{A,\mu}^\dagger \ket{\nu} = 0$, for all $\mu$.
\end{lemma}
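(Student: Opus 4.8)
The plan is to test the steady-state equation $\cL_A(\orho_A)=0$ against a single kernel vector and then exploit the positivity of $\orho_A$. First I would fix an arbitrary $\ket{\nu}\in\ker(\orho_A)$, so that $\orho_A\ket{\nu}=0$; since $\orho_A$ is Hermitian, this simultaneously gives $\bra{\nu}\orho_A=0$. The idea is to evaluate the scalar $\bra{\nu}\cL_A(\orho_A)\ket{\nu}$, which must equal zero, and to show that every contribution containing a bare $\orho_A$ adjacent to $\ket{\nu}$ or $\bra{\nu}$ drops out, leaving only a sum of manifestly non-negative quantities.

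Concretely, writing $\cL_A(\orho_A)=-i[\bH_A,\orho_A]+\sum_\mu \cD_{\bL_{A,\mu}}(\orho_A)$ and sandwiching with $\bra{\nu}\,\cdot\,\ket{\nu}$, the commutator term $-i\big(\bra{\nu}\bH_A\orho_A\ket{\nu}-\bra{\nu}\orho_A\bH_A\ket{\nu}\big)$ vanishes because each product meets either $\orho_A\ket{\nu}$ or $\bra{\nu}\orho_A$. Likewise, in each dissipator $\cD_{\bL_{A,\mu}}(\orho_A)=\bL_{A,\mu}\orho_A\bL_{A,\mu}^\dagger-\tfrac12\big(\bL_{A,\mu}^\dagger\bL_{A,\mu}\orho_A+\orho_A\bL_{A,\mu}^\dagger\bL_{A,\mu}\big)$, the anticommutator piece vanishes for the same reason, so that only $\bra{\nu}\bL_{A,\mu}\orho_A\bL_{A,\mu}^\dagger\ket{\nu}$ survives.

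The key step is then to factor $\orho_A=\sqrt{\orho_A}\,\sqrt{\orho_A}$ using the positive semidefinite square root, so that each surviving term becomes $\bra{\nu}\bL_{A,\mu}\sqrt{\orho_A}\,\sqrt{\orho_A}\,\bL_{A,\mu}^\dagger\ket{\nu}=\big\|\sqrt{\orho_A}\,\bL_{A,\mu}^\dagger\ket{\nu}\big\|^2\geq 0$, where I use that $\sqrt{\orho_A}$ is Hermitian to identify the bra with the adjoint of the ket. Collecting the contributions yields $0=\bra{\nu}\cL_A(\orho_A)\ket{\nu}=\sum_\mu\big\|\sqrt{\orho_A}\,\bL_{A,\mu}^\dagger\ket{\nu}\big\|^2$. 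Since this is a sum of non-negative terms equal to zero, each term vanishes, i.e. $\sqrt{\orho_A}\,\bL_{A,\mu}^\dagger\ket{\nu}=0$ for every $\mu$, which is exactly the claim.

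I do not expect a genuine obstacle here; the argument is a clean ``sum of squares equals zero'' computation. The only point requiring a little care is to note at the outset that $\ket{\nu}\in\ker(\orho_A)$ forces $\bra{\nu}\orho_A=0$ as well, by Hermiticity of $\orho_A$. This observation is what kills the commutator and both anticommutator terms and isolates the quadratic form $\big\|\sqrt{\orho_A}\,\bL_{A,\mu}^\dagger\ket{\nu}\big\|^2$, after which positivity finishes the proof.
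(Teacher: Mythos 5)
Your proof is correct and follows essentially the same route as the paper's: sandwiching $\cL_A(\orho_A)=0$ between $\bra{\nu}$ and $\ket{\nu}$, observing that all terms with $\orho_A$ adjacent to the kernel vector drop out, and concluding from the vanishing of the resulting sum of non-negative terms $\sum_\mu \big\|\sqrt{\orho_A}\,\bL_{A,\mu}^\dagger\ket{\nu}\big\|^2$ that each term is zero. Your write-up merely makes explicit the steps (Hermiticity of $\orho_A$, the square-root factorization) that the paper's terse proof leaves implicit.
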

\begin{proof}

For $\ket{\nu} \in \ker(\orho_A)$ we have $\bra{\nu} \cL_A(\orho_A) \ket{\nu} = \sum_k \bra{\nu} \bL_{A,\mu} \orho_A \bL_{A,\mu}^\dagger \ket{\nu}$. Since $\cL_A(\orho_A) = 0$ each term of this positive sum must vanish.
\end{proof}
\smallskip

\begin{lemma} \label{lem:incluNoyau}
Denote by $\rho=\orho_A$ the unique density operator solution of  $\cL_A(\rho) = 0$. For a traceless operator $\bY$ such that $\ker(\orho_A) \subseteq \ker(\bY)$, the traceless solution to $\bX = \cL_A^{-1}(\bY)$ also satisfies $\ker(\orho_A) \subseteq \ker(\bX)$.
\end{lemma}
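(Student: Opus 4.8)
The plan is to represent the inverse as a time integral of the semigroup and then to show that the relevant ``right–kernel'' condition is preserved along the flow. Write $\bQ$ for the orthogonal projector onto $\ker\orho_A$ and $\bP=\bI-\bQ$ for the projector onto $\mathrm{supp}\,\orho_A$, so that $\bQ\orho_A=\orho_A\bQ=0$ and $\orho_A\orho_A^{-1}=\bP$ (Moore--Penrose pseudo-inverse). Since $\tr\bY=0$ we have $\coR(\bY)=\tr(\bY)\orho_A=0$, so by Remark~\ref{rmk:InvLindblad} the traceless solution is $\bX=-\otau\coK_A(\bY)=-\int_0^{+\infty}e^{t\cL_A}(\bY)\,dt$, an absolutely convergent integral (the sign is irrelevant for the kernel). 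The hypothesis $\ker\orho_A\subseteq\ker\bY$ reads $\bY\bQ=0$, and the desired conclusion $\ker\orho_A\subseteq\ker\bX$ reads $\bX\bQ=0$. Hence it suffices to prove that the subspace $S=\{\bW:\bW\bQ=0\}=\ker(\bW\mapsto\bW\bQ)$ is invariant under $e^{t\cL_A}$: then $\bY\in S$ forces $e^{t\cL_A}(\bY)\in S$ for every $t$, and since $S$ is closed the convergent integral $\bX$ lies in $S$ as well.

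Two ingredients feed the invariance of $S$. First, Lemma~\ref{lem:trivial} gives $\sqrt{\orho_A}\,\bL_{A,\mu}^\dagger\ket{\nu}=0$ for every $\ket{\nu}\in\ker\orho_A$, i.e.\ $\bL_{A,\mu}^\dagger$ leaves $\ker\orho_A$ invariant; in projector form $\bL_{A,\mu}^\dagger\bQ=\bQ\bL_{A,\mu}^\dagger\bQ$, equivalently $\bQ\bL_{A,\mu}\bP=0$. Second, and this is the crucial point, I would show that the off-diagonal block of the effective non-Hermitian generator $\bG=-i\bH_A-\tfrac12\sum_\mu\bL_{A,\mu}^\dagger\bL_{A,\mu}$ vanishes, namely $\bQ\bG\bP=0$. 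To obtain this I sandwich the stationarity relation $\cL_A(\orho_A)=0$ between $\bQ$ on the left and $\bP$ on the right. The commutator contributes only $-i\bQ\bH_A\orho_A$, since $\bQ\orho_A\bH_A\bP=0$ because $\bQ\orho_A=0$; among the dissipative terms, $\bQ\bL_{A,\mu}\orho_A\bL_{A,\mu}^\dagger\bP=0$ because $\bQ\bL_{A,\mu}\orho_A=(\bQ\bL_{A,\mu}\bP)\orho_A=0$ by the first ingredient, while $\bQ\orho_A\bL_{A,\mu}^\dagger\bL_{A,\mu}\bP=0$ because $\bQ\orho_A=0$, leaving only $-\tfrac12\sum_\mu\bQ\bL_{A,\mu}^\dagger\bL_{A,\mu}\orho_A$. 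Collecting the surviving terms yields $\bQ\bG\,\orho_A=0$, and right-multiplying by $\orho_A^{-1}$ gives $\bQ\bG\bP=\bQ\bG\,\orho_A\orho_A^{-1}=0$.

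With these in hand the invariance is a one-line computation. For $\bW\in S$, i.e.\ $\bW\bQ=0$, I expand $\cL_A(\bW)\bQ$: the terms $-i\bH_A\bW\bQ$ and $-\tfrac12\bL_{A,\mu}^\dagger\bL_{A,\mu}\bW\bQ$ vanish outright, the jump term vanishes since $\bW\bL_{A,\mu}^\dagger\bQ=\bW\bQ\bL_{A,\mu}^\dagger\bQ=0$, and what remains is
\[
\cL_A(\bW)\bQ \;=\; \bW\Big(i\bH_A-\tfrac12\sum_\mu\bL_{A,\mu}^\dagger\bL_{A,\mu}\Big)\bQ \;=\; \bW\,\bG^\dagger\bQ .
\]
Since $\bP\bG^\dagger\bQ=(\bQ\bG\bP)^\dagger=0$, the operator $\bG^\dagger\bQ$ maps into $\ker\orho_A=\mathrm{range}\,\bQ$, so $\bW\bG^\dagger\bQ=\bW\bQ\,\bG^\dagger\bQ=0$. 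Thus $\cL_A(\bW)\in S$, i.e.\ $S$ is $\cL_A$-invariant, hence $e^{t\cL_A}$-invariant, which closes the argument as explained above.

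The only genuinely nontrivial step is the identity $\bQ\bG\bP=0$: without it neither the Hamiltonian part $\bH_A$ nor the gauge term $\sum_\mu\bL_{A,\mu}^\dagger\bL_{A,\mu}$ individually respects the splitting $\cH_A=\mathrm{supp}\,\orho_A\oplus\ker\orho_A$, and $S$ need not be invariant. It is precisely the stationarity of $\orho_A$, combined with the support-invariance coming from Lemma~\ref{lem:trivial}, that makes the relevant block of the effective generator disappear; everything else is bookkeeping with the projectors $\bP$ and $\bQ$.
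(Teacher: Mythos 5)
Your proof is correct, and its algebraic core coincides with the paper's: both arguments reduce the lemma to the invariance under $\cL_A$ of the set of operators annihilating $\ker\orho_A$ on the right, and both obtain that invariance from the same two ingredients — Lemma~\ref{lem:trivial} (equivalently $\bQ\bL_{A,\mu}\bP=0$) and the stationarity $\cL_A(\orho_A)=0$. Your block identity $\bQ\bG\bP=0$ is precisely the paper's step of ``subtracting $0=\cL_A(\orho_A)$ inside the bracket'', rewritten in projector language. The genuine difference is how invariance under $\cL_A$ is transferred to $\cL_A^{-1}$. The paper appeals to the fact that $\cL_A$ is a bijection on traceless operators, so that (implicitly, by finite dimensionality) an injective linear map sending the invariant subspace into itself must map it onto itself, whence its inverse preserves the subspace as well. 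You instead invoke the explicit integral representation $\bX=-\int_0^{\infty}e^{t\cL_A}(\bY)\,dt$ of the traceless inverse from Remark~\ref{rmk:InvLindblad}, together with invariance of the closed subspace $S=\{\bW:\bW\bQ=0\}$ under the semigroup $e^{t\cL_A}$. Your route is slightly longer but more self-contained: it makes the inversion step explicit (no hidden dimension count), at the price of invoking the semigroup and absolute convergence of the integral; the paper's version is shorter but leaves that final linear-algebra step tacit. Your organization around the effective generator $\bG$ also has the expository merit of isolating why the Hamiltonian and the $\sum_\mu\bL_{A,\mu}^\dag\bL_{A,\mu}$ terms, neither of which alone respects the splitting $\mathrm{supp}\,\orho_A\oplus\ker\orho_A$, jointly preserve it.
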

\begin{proof}
Note that the operators have such kernels if and only if they can be written
$\bX = \tilde{\bX} \orho_A$, $\bY = \tilde{\bY} \orho_A$. Since $\cL_A$ is a bijection on the space of traceless operators, the property is equivalent to show that $\bY \ket{\nu} = \cL_A(\tilde{\bX} \orho_A) \ket{\nu} = 0$ for all $\ket{\nu} \in \ker(\orho_A)$. By using $\orho_A \ket{\nu} = 0$ and Lemma \ref{lem:trivial}, we directly get
$$
\cL_A(\tilde{\bX} \orho_A) \ket{\nu} =
\bX \left(i \orho_A \bH_A - \tfrac{1}{2} \orho_A \sum_\mu \bL_{A_\mu}^\dagger \bL_{A_\mu}\right) \ket{\nu} \, .
$$
Subtracting $0 = \cL_A(\orho_A)$ inside the bracket, applying $\orho_A \ket{\nu} = 0$ and Lemma \ref{lem:trivial} once again, we do get 0.
\end{proof}

\begin{corollary} \label{cor:inclu_noyau}
Let the density operator $\orho_A$ be solution of $\cL_A(\orho_A) = \widetilde{\cL}_A(\orho_A) + \cD_{\ba}(\orho_A) = 0$ with $\widetilde{\cL}_A$ of the form \eqref{eq:DefLind_A} . For the traceless operator $\bY = \orho_A a^\dag - \tr(\orho_A a^\dag) \orho_A$, the traceless solution to $\bX = \cL_A^{-1}(\bY)$  satisfies $\ker(\orho_A) \subseteq \ker(\bX)$.
\end{corollary}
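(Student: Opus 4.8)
The plan is to reduce the corollary directly to Lemma \ref{lem:incluNoyau}. That lemma already asserts that if a traceless operator $\bY$ satisfies $\ker(\orho_A) \subseteq \ker(\bY)$, then the traceless solution $\bX = \cL_A^{-1}(\bY)$ inherits $\ker(\orho_A) \subseteq \ker(\bX)$. Hence the entire content to be established is the single hypothesis $\ker(\orho_A) \subseteq \ker(\bY)$ for the specific $\bY = \orho_A \ba^\dag - \tr(\orho_A \ba^\dag)\orho_A$; once this is checked, Lemma \ref{lem:incluNoyau} delivers the conclusion with no further work.

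The crucial observation is that $\cL_A = \widetilde{\cL}_A + \cD_{\ba}$ is itself a Lindbladian of the form \eqref{eq:DefLind_A}: its Hamiltonian is that of $\widetilde{\cL}_A$, and its family of dissipation operators is the family $\{\bL_{A,\mu}\}$ of $\widetilde{\cL}_A$ augmented by the single operator $\ba$. Since $\cL_A(\orho_A)=0$ by assumption, Lemma \ref{lem:trivial} applies verbatim to $\cL_A$, and in particular to the extra dissipator $\ba$ playing the role of one of the $\bL_{A,\mu}$: for every $\ket{\nu}\in\ker(\orho_A)$ we obtain $\sqrt{\orho_A}\,\ba^\dag\ket{\nu}=0$.

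From $\sqrt{\orho_A}\,\ba^\dag\ket{\nu}=0$ I would then apply $\sqrt{\orho_A}$ once more (equivalently, use $\ker\sqrt{\orho_A}=\ker\orho_A$) to get $\orho_A\,\ba^\dag\ket{\nu}=0$. Since the second term of $\bY$ is proportional to $\orho_A\ket{\nu}=0$, this yields $\bY\ket{\nu}=\orho_A\,\ba^\dag\ket{\nu}-\tr(\orho_A \ba^\dag)\,\orho_A\ket{\nu}=0$, so indeed $\ker(\orho_A)\subseteq\ker(\bY)$. Invoking Lemma \ref{lem:incluNoyau} with this $\bY$ then closes the argument.

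The only step I would flag as requiring care is the reinterpretation of the cascaded fast dynamics: the term $\cD_{\ba}$ must be absorbed into a single combined Lindbladian $\cL_A$ so that $\ba$ legitimately counts as one of its dissipation operators, which is exactly what licenses the application of Lemma \ref{lem:trivial}. Everything after that identification is immediate, so I expect no genuine obstacle beyond correctly setting up this observation.
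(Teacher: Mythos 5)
Your proposal is correct and follows essentially the same route as the paper's own proof: apply Lemma~\ref{lem:trivial} with $\ba$ viewed as one of the dissipation operators of the combined Lindbladian $\widetilde{\cL}_A + \cD_{\ba}$ to get $\sqrt{\orho_A}\,\ba^\dag\ket{\nu}=0$, deduce $\ker(\orho_A)\subseteq\ker(\bY)$, and conclude via Lemma~\ref{lem:incluNoyau}. Your write-up is in fact slightly more explicit than the paper's, which leaves both the absorption of $\cD_{\ba}$ into the Lindbladian and the step from $\sqrt{\orho_A}\,\ba^\dag\ket{\nu}=0$ to $\orho_A\,\ba^\dag\ket{\nu}=0$ implicit.
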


\begin{proof}
From Lemma \ref{lem:trivial} we get that for any $\ket{\nu} \in \ker(\orho_A)$ we have $\sqrt{\orho_A} \ba^\dagger \ket{\nu} = 0$. Therefore $\ker(\orho_A) \subseteq \ker(\bY)$, then apply Lemma \ref{lem:incluNoyau}.
\end{proof}


\section{Lindblad form of super-operators}


\subsection{$\cL_1$: partial trace of Lindblad super-operators}

On the Hilbert space $\cH_A\otimes \cH_B$ consider the Lindblad super-operator
$$
  \cL(\rho)= - i [\bH,\rho] +\sum_\mu \bL_{\mu}\rho \bL_{\mu}^\dag - \tfrac{1}{2} \Big( \bL_\mu^\dag \bL_\mu \rho + \rho  \bL_\mu^\dag \bL_\mu \Big)
  $$
with  $\bH $  an Hermitian operator on $\cH_A\otimes \cH_B$ and with  $\bL_{\mu}$ operators on  $\cH_A\otimes \cH_B$.
\begin{lemma} \label{lem:TrALint}
For any density operator $\rho_A$ on $\cH_A$ the super-operator given by
$$
\rho_B \mapsto \tr_{A}\Big( \cL(\rho_A\otimes \rho_B)\Big)
$$
remains a Lindbaldian  super-operator  on $\cH_B$.
\end{lemma}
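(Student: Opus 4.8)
The plan is to use the standard generator characterization of Lindbladians: a super-operator $\mathcal{M}$ on a finite-dimensional space is of Lindblad type if and only if it can be written $\mathcal{M}(\rho)=\Phi(\rho)-\bG\rho-\rho\bG^\dag$, where $\Phi$ is completely positive and the drift $\bG$ satisfies the trace condition $\bG+\bG^\dag=\Phi^*(\bI)$, with $\Phi^*$ the Hilbert--Schmidt adjoint. Indeed, such a form yields the jump operators $\bL_\mu$ as a Kraus family of $\Phi$, the Hamiltonian as $\bH=\tfrac{1}{2i}(\bG-\bG^\dag)$, and the trace condition becomes exactly $\bG+\bG^\dag=\sum_\mu\bL_\mu^\dag\bL_\mu$. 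For the given $\cL$ on $\cH_A\otimes\cH_B$ I would take $\Phi(\rho)=\sum_\mu\bL_\mu\rho\bL_\mu^\dag$ and $\bG=\tfrac12\sum_\mu\bL_\mu^\dag\bL_\mu+i\bH$, and then show that substituting $\rho_A\otimes\rho_B$ and taking $\tr_A$ of each piece preserves this structure.

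For the completely positive part, set $\Psi(\rho_B)=\tr_A\big(\Phi(\rho_A\otimes\rho_B)\big)$. This is a composition of three maps: $\rho_B\mapsto\rho_A\otimes\rho_B$, then $\Phi$, then $\tr_A$. The last two are completely positive by hypothesis. For the first, since $\rho_A\ge 0$ I would write $\rho_A=\sum_i p_i\,\ketbra{a_i}{a_i}$ with $p_i\ge 0$ and exhibit the Kraus operators $\bV_i:\cH_B\to\cH_A\otimes\cH_B$, $\bV_i=\sqrt{p_i}\,\ket{a_i}\otimes\bI_B$, for which $\sum_i\bV_i\rho_B\bV_i^\dag=\rho_A\otimes\rho_B$; hence this map is completely positive too. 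A composition of completely positive maps being completely positive, $\Psi$ is completely positive.

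For the drift part, the key is the pull-through identity $\tr_A\big(\bX(\bI_A\otimes\rho_B)\big)=\tr_A(\bX)\,\rho_B$, valid because $\bI_A\otimes\rho_B$ acts trivially on the traced factor. Writing $\rho_A\otimes\rho_B=(\rho_A\otimes\bI_B)(\bI_A\otimes\rho_B)$, it gives $\tr_A\big(\bG(\rho_A\otimes\rho_B)\big)=\bG_B\,\rho_B$ with $\bG_B:=\tr_A\big(\bG(\rho_A\otimes\bI_B)\big)$. Taking Hermitian conjugates and using that $\rho_A,\rho_B$ are Hermitian yields $\tr_A\big((\rho_A\otimes\rho_B)\bG^\dag\big)=\rho_B\,\bG_B^\dag$. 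Collecting the three contributions, the reduced super-operator becomes $\rho_B\mapsto\Psi(\rho_B)-\bG_B\rho_B-\rho_B\bG_B^\dag$, i.e.\ exactly the candidate Lindblad form with completely positive part $\Psi$ and drift $\bG_B$.

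It remains to verify the trace condition $\bG_B+\bG_B^\dag=\Psi^*(\bI_B)$, which I expect to be the only genuine bookkeeping. The cleanest route avoids computing $\Psi^*$ directly: the reduced map inherits trace preservation from $\cL$, since $\tr_B\big(\Psi(\rho_B)-\bG_B\rho_B-\rho_B\bG_B^\dag\big)=\tr_{AB}\big(\cL(\rho_A\otimes\rho_B)\big)=0$ for all $\rho_B$, and rewriting the left side as $\tr_B\big((\Psi^*(\bI_B)-\bG_B-\bG_B^\dag)\rho_B\big)=0$ for all $\rho_B$ forces $\Psi^*(\bI_B)=\bG_B+\bG_B^\dag$. (Alternatively one checks it directly from $\Phi^*(\bI_{AB})=\bG+\bG^\dag$, via the identity $\Psi^*(\bI_B)=\tr_A((\rho_A\otimes\bI_B)\Phi^*(\bI_{AB}))$ together with the cyclicity of $\bM_A\otimes\bI_B$ through $\tr_A$.) With both the complete positivity of $\Psi$ and the trace condition established, the characterization above shows that the reduced super-operator is Lindbladian, completing the proof.
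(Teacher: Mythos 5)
Your proof is correct, but it takes a genuinely different route from the paper's. The paper argues by linearity: it splits $\cL$ into its Hamiltonian part and individual dissipators, expands $\bH$ and each jump operator in tensor-product form $\sum_k \bA_k\otimes\bB_k$, and computes the partial trace explicitly. The Hamiltonian part reduces to $-i[\sum_k\tr(\bA_k\rho_A)\bB_k,\cdot\,]$, while each dissipator reduces to a generator in ``matrix form'' with coefficient matrix $M_{k,k'}=\tr(\bA_k\rho_A\bA_{k'}^\dag)$, shown to be positive semidefinite by a Gram-matrix argument; a Cholesky factorization $M=NN^\dag$ then re-sums it into standard dissipators $\cD_{\bX_{k''}}$ with $\bX_{k''}=\sum_k N_{k,k''}\bB_k$. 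You avoid the tensor decomposition entirely and instead invoke the abstract characterization of Lindblad generators (completely positive part minus a drift obeying the trace condition $\bG+\bG^\dag=\Phi^*(\bI)$), checking that each ingredient survives the substitution $\rho\mapsto\rho_A\otimes\rho_B$ followed by $\tr_A$: complete positivity by composing three CP maps, the drift via the pull-through identities, and the trace condition from trace annihilation of $\cL$; all three verifications are sound. What each approach buys: yours is shorter, more structural, and needs no positivity argument for a coefficient matrix, since positivity is carried by complete positivity of $\Psi$; the paper's is constructive, and its explicit output --- the reduced Hamiltonian $\sum_k\tr(\bA_k\rho_A)\bB_k$ and the factorization-defined jump operators --- is precisely what the paper reuses downstream, e.g.\ the Zeno Hamiltonian in \eqref{eq:L1c} and the Cholesky-based Lindblad form of $\cL_2$ in Section \ref{ssec:GeneralLambda}. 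With your argument, exhibiting concrete jump operators for the reduced dynamics would still require extracting a Kraus decomposition of $\Psi$, which is essentially where the paper's Gram/Cholesky step would reappear.
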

\begin{proof}
  By linearity, it is enough to consider the following two cases:
  $$
  \cL(\rho)=  - i [\bH,\rho]\quad  \text{and} \quad \cL(\rho)= \bL\rho \bL^\dag - \tfrac{1}{2} \Big( \bL^\dag \bL \rho + \rho  \bL^\dag \bL \Big).
  $$
With $\bH= \sum_k \bA_k\otimes \bB_k$, we have
$
\tr_A\Big( [ \bH , \rho_A\otimes \rho_B]\Big) = \left[\bH_B, \rho_B\right]
$
where  $\bH_B= \sum_k\tr(\bA_k \rho_A) \bB_k$ is  an  Hermitian operator on $\cH_B$.

With $\bL= \sum_k \bA_k\otimes \bB_k$, we have
\begin{multline*}
  \tr_{A}\left( \bL \rho_A\otimes \rho_B \bL^\dag - \tfrac{1}{2} \Big( \bL^\dag \bL \rho_A\otimes \rho_B + \rho_A\otimes \rho_B  \bL^\dag \bL \Big)\right)
  \\
  =  \sum_{k,k'} \tr(\bA_k \rho_A \bA_{k'}^\dag)\left( \bB_k \rho_B \bB_{k'}^\dag - \tfrac{1}{2} \Big(\bB_{k'}^\dag \bB_k \rho_B + \rho_B \bB_{k'}^\dag \bB_k \Big)\right)
  .
\end{multline*}
Since $\rho_A \geq 0$, the square matrix $M=\Big( \tr(\bA_k \rho_A \bA_{k'}^\dag)\Big)$ is Hermitian an non-negative (use the spectral decomposition of $\rho_A$ and the fact that, for any $\ket{\psi}\in\cH_A$, the matrix  $\Big( \tr(\bA_k \ket\psi\bra\psi  \bA_{k'}^\dag)\Big)$ is the  Gram matrix  associated with the vectors $\bA_k\ket\psi$ of $\cH_A$). Take the   Cholesky decomposition  $M= N N^\dag $.  We have
 \begin{multline*}
   \sum_{k,k'} M_{k,k'} \left( \bB_k \rho_B \bB_{k'}^\dag - \tfrac{1}{2} \Big(\bB_{k'}^\dag \bB_k \rho_B + \rho_B \bB_{k'}^\dag \bB_k \Big)\right)
    \\
    = \sum_{k,k',k''} N_{k,k"} N^*_{k',k''} \left( \bB_k \rho_B \bB_{k'}^\dag - \tfrac{1}{2} \Big(\bB_{k'}^\dag \bB_k \rho_B + \rho_B \bB_{k'}^\dag \bB_k \Big)\right)
    \\
   = \sum_{k''} \bX_{k''} \rho_B \bX_{k''}^\dag - \tfrac{1}{2} \Big(\bX_{k''}^\dag \bX_{k''} \rho_B + \rho_B \bX_{k''}^\dag \bX_{k''} \Big)
\end{multline*}
with $\bX_{k''}= \sum_{k} N_{k,k''} \bB_k$.
\end{proof}


\subsection{$\cL_2$: positivity of quadratic form on $\{\bB_k,\bB_k^\dagger\}$}

The proof uses the following property.

\begin{lemma} \label{Lem:Xipos-Aux1}:
For any operators $\bX$ and $\bY$ on $\cH_A$, it holds that
$$
\cL_A(\bX \orho_A) \bY^\dagger + \bX \cL_A(\orho_A \bY^\dagger) =
\cL_A(\bX \orho_A \bY^\dagger) - \sum_\mu [\bL_{A,\mu},\bX] \orho_A [\bL_{A,\mu},\bY]^\dagger \, .
$$
\end{lemma}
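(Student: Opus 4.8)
The plan is to verify the identity by splitting $\cL_A$ into its Hamiltonian part $-i[\bH_A,\,\cdot\,]$ and its dissipative parts $\cD_{\bL_{A,\mu}}$, and to check the ``Leibniz defect'' of each part separately. The key observation that shapes the whole approach is that this is \emph{not} a purely algebraic identity valid for an arbitrary operator in place of $\orho_A$: it crucially uses the steady-state condition $\cL_A(\orho_A)=0$. Concretely, I would first establish the general identity, valid for \emph{any} operators $\bX,\bY,\bZ$,
\begin{equation*}
\cL_A(\bX\bZ)\bY^\dagger + \bX\,\cL_A(\bZ\bY^\dagger) = \cL_A(\bX\bZ\bY^\dagger) - \sum_\mu [\bL_{A,\mu},\bX]\,\bZ\,[\bL_{A,\mu},\bY]^\dagger + \bX\,\cL_A(\bZ)\,\bY^\dagger ,
\end{equation*}
and then specialize $\bZ=\orho_A$, so that the last term drops out and leaves exactly the claimed equality.

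For the Hamiltonian part, a direct expansion of $-i[\bH_A,\bX\bZ]\bY^\dagger - i\,\bX[\bH_A,\bZ\bY^\dagger]$ shows that the terms in which $\bH_A$ sits at the extreme left or extreme right cancel against $-i[\bH_A,\bX\bZ\bY^\dagger]$, leaving precisely $\bX\,(-i[\bH_A,\bZ])\,\bY^\dagger$. Thus the Hamiltonian produces no double-commutator correction, only a contribution to the surviving term $\bX\,\cL_A(\bZ)\,\bY^\dagger$.

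For each dissipative channel I would expand $\cD_{\bL_{A,\mu}}(\bX\bZ)\bY^\dagger + \bX\,\cD_{\bL_{A,\mu}}(\bZ\bY^\dagger) - \cD_{\bL_{A,\mu}}(\bX\bZ\bY^\dagger)$ into its sandwich terms. The anticommutator contributions in which $\bL_{A,\mu}^\dagger\bL_{A,\mu}$ is adjacent to the outer block $\bX\,\cdots\,\bY^\dagger$ cancel immediately, and the remaining terms can be regrouped so that adding the correction $[\bL_{A,\mu},\bX]\,\bZ\,[\bL_{A,\mu},\bY]^\dagger$ collapses everything down to the ``diagonal'' dissipator $\bX\,\cD_{\bL_{A,\mu}}(\bZ)\,\bY^\dagger$. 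This bookkeeping is where the main obstacle lies: one must track operator orderings and signs carefully to see that the cross terms, in which $\bL_{A,\mu}$ acts through both $\bX$ and $\bY$, are exactly what the double commutator supplies, leaving no residue beyond $\bX\,\cD_{\bL_{A,\mu}}(\bZ)\,\bY^\dagger$.

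Summing the Hamiltonian defect and the dissipative defects over $\mu$ then gives the right-hand side $\bX\,\cL_A(\bZ)\,\bY^\dagger$ of the general identity. Finally, setting $\bZ=\orho_A$ and invoking the assumed steady-state relation $\cL_A(\orho_A)=0$ makes this surviving term vanish, which is precisely the statement of the lemma.
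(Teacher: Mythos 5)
Your proposal is correct and follows essentially the same route as the paper's proof, which is a direct expansion of both sides with cancellation of terms, using the steady-state condition $\cL_A(\orho_A)=0$ to kill exactly the residual term $\bX\,\cL_A(\orho_A)\,\bY^\dagger$. Your packaging of the computation as a general Leibniz-defect identity in an arbitrary $\bZ$, specialized afterwards to $\bZ=\orho_A$, is just a tidier organization of the same calculation.
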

\begin{proof}
To check this, just write down the expressions, cancel some terms and use $\cL_A(\orho_A) = 0$ for the remaining ones.
\end{proof}

\begin{lemma}\label{Lem:Xipos}
The matrix $X$ in \eqref{eq:L2bigfirst} is always positive semidefinite. In particular, for $\Hint=\bA \otimes \bB$ with $\bA$ and $\bB$ Hermitian, the coefficient $\tr(\bF \orho_A \bA + \bA \orho_A \bF^\dagger)$ is always nonnegative.
\end{lemma}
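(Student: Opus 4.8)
The plan is to reduce the statement to a single effective-operator inequality and then exploit the algebraic identity of Lemma~\ref{Lem:Xipos-Aux1} together with the positivity built into $\orho_A\geq 0$. First I would test $X$ against an arbitrary vector $v=(v_1,\dots,v_m)\in\mathbb{C}^m$. Writing $\bA=\sum_j v_j\bA_j$ and $\bF=\sum_j v_j\bF_j$ and using $(\sum_k v_k\bA_k)^\dagger=\sum_k v_k^*\bA_k^\dagger$, a direct expansion of \eqref{eq:L2bigfirst} gives $v^\dagger X v=\tr\!\big(\bF\orho_A\bA^\dagger+\bA\orho_A\bF^\dagger\big)=:S$. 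This already shows it suffices to control a single pair $(\bA,\bF)$; in particular the ``in particular'' statement is just $m=1$. Since $\orho_A^\dagger=\orho_A$, one checks $S=2\,\mathrm{Re}\,\tr(\bF\orho_A\bA^\dagger)\in\mathbb{R}$, consistent with $X$ being Hermitian.

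Next I would record two facts about $\bF$ inherited from \eqref{eq:Hint:K1} and Remark~\ref{rmk:InvLindblad}: by linearity of $\coK_A$ and $\coK_A(\orho_A)=\orho_A$ one has $\cL_A(\bF\orho_A)=-\big(\bA\orho_A-\tr(\bA\orho_A)\orho_A\big)$, and $\tr(\bF\orho_A)=\sum_j v_j\tr(\bF_j\orho_A)=0$. The crux is then to apply Lemma~\ref{Lem:Xipos-Aux1} with $\bX=\bY=\bF$ and take the trace. Since $\tr\circ\cL_A=0$, the term $\cL_A(\bF\orho_A\bF^\dagger)$ drops out and I obtain $\tr\!\big(\cL_A(\bF\orho_A)\bF^\dagger\big)+\tr\!\big(\bF\,\cL_A(\orho_A\bF^\dagger)\big)=-\sum_\mu\tr\!\big([\bL_{A,\mu},\bF]\,\orho_A\,[\bL_{A,\mu},\bF]^\dagger\big)$. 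Here I would use that a Lindbladian commutes with the adjoint, so $\cL_A(\orho_A\bF^\dagger)=\cL_A\big((\bF\orho_A)^\dagger\big)=\big(\cL_A(\bF\orho_A)\big)^\dagger=-\orho_A\,\delta\bA^\dagger$, where $\delta\bA=\bA-\tr(\bA\orho_A)\bI$.

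Substituting $\cL_A(\bF\orho_A)=-\delta\bA\,\orho_A$ turns the left-hand side into $-\big(\tr(\delta\bA\,\orho_A\bF^\dagger)+\tr(\bF\orho_A\,\delta\bA^\dagger)\big)$; the centering terms vanish because $\tr(\bF\orho_A)=\tr(\orho_A\bF^\dagger)=0$, so this equals $-S$. Hence $S=\sum_\mu\tr\!\big([\bL_{A,\mu},\bF]\,\orho_A\,[\bL_{A,\mu},\bF]^\dagger\big)$, and each summand is $\geq 0$ since $\tr(\bC\orho_A\bC^\dagger)=\tr(\bC^\dagger\bC\,\orho_A)\geq 0$ for $\orho_A\geq 0$ with $\bC=[\bL_{A,\mu},\bF]$. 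Therefore $v^\dagger X v=S\geq 0$ for every $v$, i.e.\ $X\geq 0$, and the coefficient in the dispersive case follows as the scalar ($m=1$) instance. The main obstacle I anticipate is bookkeeping rather than conceptual: getting the signs right when tracing Lemma~\ref{Lem:Xipos-Aux1}, correctly invoking $\cL_A(M^\dagger)=\big(\cL_A(M)\big)^\dagger$, and using $\tr(\bF\orho_A)=0$ to discard the centering; once the identity is organized so that the manifestly nonnegative term $\sum_\mu\tr([\bL_{A,\mu},\bF]\orho_A[\bL_{A,\mu},\bF]^\dagger)$ stands alone, positivity is immediate.
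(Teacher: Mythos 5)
Your proof is correct, and its engine is the same as the paper's: apply Lemma~\ref{Lem:Xipos-Aux1} with $\bX=\bY$, take the trace, use $\tr(\cL_A(\cdot))=0$ to drop the $\cL_A(\bX\orho_A\bY^\dagger)$ term, and use $\tr(\bF\orho_A)=\overline{\tr(\orho_A\bF^\dagger)}=0$ to discard the centering terms, leaving the manifestly nonnegative sum $\sum_\mu\tr([\bL_{A,\mu},\bF]\,\orho_A\,[\bL_{A,\mu},\bF]^\dagger)$. Where you genuinely diverge is in how the matrix case is handled. The paper applies the auxiliary lemma entrywise with $(\bX,\bY)=(\bF_j,\bF_k)$, re-expresses every component as $X_{j,k}=\sum_\mu\tr([\bL_{A,\mu},\bF_k]\,\orho_A\,[\bL_{A,\mu},\bF_j]^\dagger)$, and then gets positive semidefiniteness by recognizing a Gram matrix when $\orho_A$ is replaced by a pure state, summing over $\mu$ and taking a convex combination over the spectral decomposition of $\orho_A$. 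You instead contract $X$ with an arbitrary $v\in\mathbb{C}^m$ and exploit linearity of the assignment $\bA_k\mapsto\bF_k$ (so that $-\cL_A(\bF\orho_A)=\bA\orho_A-\tr(\bA\orho_A)\orho_A$ and $\tr(\bF\orho_A)=0$ persist for $\bA=\sum_j v_j\bA_j$, $\bF=\sum_j v_j\bF_j$), turning $v^\dagger X v$ into exactly the scalar quantity $\tr(\bF\orho_A\bA^\dagger+\bA\orho_A\bF^\dagger)$; only the diagonal instance $\bX=\bY=\bF$ of the lemma is ever needed, and positivity of each summand follows from $\tr(\bC\orho_A\bC^\dagger)\geq 0$ for $\orho_A\geq 0$. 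Your route is the more economical one---no Gram matrices, no decomposition of $\orho_A$ into pure states---while the paper's route yields in exchange an explicit commutator formula for all entries $X_{j,k}$, not just for the quadratic form, which has independent interest. Two ingredients you invoke are correct and worth stating explicitly: the Hermiticity-preservation identity $\cL_A(M^\dagger)=(\cL_A(M))^\dagger$ (also implicitly used by the paper to evaluate $\cL_A(\orho_A\bF^\dagger)$), and the fact that the map $\bA_k\mapsto\bF_k$ defined via $\otau\coK_A$ is linear, which is what legitimizes the reduction to a single pair $(\bA,\bF)$.
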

\begin{proof}
The proof for the particular case is simple and introduces the main idea. Apply Lemma \ref{Lem:Xipos-Aux1} with $\bX = \bY = \bF$; use that $-\cL_A(\bF \orho_A) = \bA \orho_A - \tr(\bA \orho_A) \orho_A$; take the trace over $\cH_A$ and use that $\tr(\bF \orho_A) = 0 = \tr(\cL_A)$. This yields:
$$\tr(\bF \orho_A \bA + \bA \orho_A \bF^\dagger) = \sum_\mu \; \tr([\bL_{A,\mu},\bF] \orho_A [\bL_{A,\mu},\bF]^\dagger)$$
where the right hand side is obviously nonnegative.

Applying the same idea with $(\bX,\bY) = (\bF_j,\bF_k)$, shows that the components of $X$ can be re-expressed as
\begin{eqnarray*}
X_{j,k} & = & \sum_{\mu} \tr\left([\bL_{A,\mu},\bF_k]\,\orho_A\,[\bL_{A,\mu},\bF_j]^\dagger\right) =: \sum_{\mu} \; X_{j,k}^{(\mu,\orho_A)} .
\end{eqnarray*}
For each $\mu$ and replacing $\orho_A$ by any pure state $\ket{\opsi}\bra{\opsi}$, we would thus have
$$X_{j,k}^{(\mu,\ket{\opsi}\bra{\opsi})} = \bra{v_j^{(\mu,\opsi)}} \ket{v_k^{(\mu,\opsi)}} \quad \text{with } \ket{v_k^{(\mu,\opsi)}} = [\bL_{A,\mu},\bF_k] \ket{\opsi} \, .$$
The corresponding matrix $X^{(\mu,\ket{\opsi}\bra{\opsi})}$ is a Gram matrix, which is always positive semi-definite, see e.g.~\cite[exercise 1.1.1, page 3]{BhatiaBook2007bis}. The same then obviously holds for $X$, which is obtained by taking the sum over $\mu$ and a convex combination over different $\ket{\opsi}\bra{\opsi}$.
\end{proof}


\section{Computations for the cascaded example} \label{apx:CalcSqueeze}

We detail here the computations of the coefficients involved in the second order reduced dynamics of a cavity with a squeezed drive coupled to an unspecified subsystem presented in Section \ref{subsec:Squeezed}.
The fast dynamics is given by
\begin{align*}
\frac{d}{dt} \rho = g[\ba^2-{\ba^\dag}^2, \rho] + \kappa \cD_{\ba}(\rho)
\end{align*}
with $\kappa > 4g$ to guarantee the convergence towards a steady state. The evolution of the operator $\ba$ in the Heisenberg picture reads:
\begin{align} \label{eq:ExSqueeze_a}
\frac{d e^{t \cL_A^*}(\ba)}{dt}=-g[\ba^2-{\ba^\dag}^2, e^{t \cL_A^*}(\ba)] + \kappa \cD_{\ba}^*(e^{t \cL_A^*}(\ba))
\end{align}
where $\cD_{\ba}^*$ is the dual of $\cD_{\ba}$. We search $e^{t \cL_A^*}(\ba)$ in the particular form $e^{t \cL_A^*}(\ba) = f_a(t) \ba + h_a(t) \ba^\dag$ for some scalar functions $f_a(t)$ and $h_a(t)$ verifying $f_a(0)=1$ and $h_a(0) = 0$. Injecting this expression into \eqref{eq:ExSqueeze_a} and solving the differential equations for $f_a(t)$ and $h_a(t)$ leads to $f_a(t) = \frac{e^{-t \frac{\kappa-4g}{2}}+e^{-t \frac{\kappa+4g}{2}}}{2}$, $h_a(t) = \frac{e^{-t \frac{\kappa+4g}{2}}-e^{-t \frac{\kappa-4g}{2}}}{2}$ and thus for any initial state $\rho_0$, $\tr\bigl(e^{\infty \cL_A^*}(\ba) \rho_0\bigr)=0$. As the computation of the coefficients $\alpha$ and $\beta$ is similar, we detail here only the calculation of $\alpha=\kappa \tr \biggl( e^{\infty \cL_A^*}\bigl( \oa^\dag \int_0^\infty e^{t \cL_A^*}(\ba) dt \bigr)~ \rho_0 \biggr)$. Using the previous computation and by linearity of the integral we get:
\begin{align*}
\alpha = \kappa \tr \biggl( e^{\infty \cL_A^*}(\ba^\dag \ba) \rho_0 \biggr) \int_0^\infty f_a(t) dt + \kappa \tr \biggl( e^{\infty \cL_A^*}(\ba^\dag \ba^\dag) \rho_0 \biggr) \int_0^\infty g_a(t) dt
\end{align*}
Then pick $e^{t \cL_A^*}(\ba^\dag \ba)=f_N(t) \bN + h_N(t) \ba^2 + l_N(t) {\ba^\dag}^2 + m_N(t)\bI $,  (similarly for $e^{t \cL_A^*}(\ba^\dag \ba^\dag)$) and standard computations lead to $\alpha = \frac{32\kappa^2 g^2}{((\kappa+4g)(\kappa-4g))^2}$.

\bibliographystyle{plain}

\end{document}